    \def\ZZ{\mathsf Z}
    \def\le{\leqslant}
    \def\ge{\geqslant}
    \def\lnMom{\ln\Mom}
    \def\Mom#1#2#3{\big._{#1} \mkern-1mu \big\| #2 \bigr\| \big._{#3}}
    \newtheorem{theorem}{Theorem}
    \newtheorem{lemma}[theorem]{Lemma}
    \newtheorem{corollary}[theorem]{Corollary}
    \newtheorem{proposition}[theorem]{Proposition}
    \tikzset{
        every picture/.style={
            line cap=round, line join=round, baseline=-0.5ex,
        }
    }
    \pgfplotsset{
        compat=1.18,
        colormap={red--blue}{rgb=(1,0,0); rgb=(1,0,1); rgb=(0,0,1)},
        graph marker/.style={
            scatter, only marks, point meta=explicit symbolic,
            scatter/@pre marker code/.style={/tikz/mark=\pgfplotspointmeta},
            scatter/@post marker code/.style={},
        }
    }
    \def\pushtriangle#1{\xdef\trianglebuffer{\trianglebuffer #1}}
    \def\computexyz&meta(#1,#2,#3)[#4]=\f(#5,#6){
        \pgfmathsetmacro#1{#5}
        \pgfmathsetmacro#2{#6}
        \pgfmathsetmacro#3{\f(#1,#2)}
        \pgfmathsetmacro#4{\g(#3)}
    }
    \def\addplottriangles{
        \preparetriangles
        \edef\pgfmarshal{
            \noexpand\addplot3 [patch, patch type=triangle,
                point meta=explicit] coordinates {\trianglebuffer};
        }
        \pgfmarshal
    }
\begin{document}

                                   \title
                    {Ambidextrous Degree Sequence Bounds
                   for Pessimistic Cardinality Estimation}

                            \author{Yu-Ting Lin}
   \affiliation{\institution{National Taiwan University}\country{Taipei}}
                        \email{r14942055@ntu.edu.tw}
                            \author{Hsin-Po Wang}
   \affiliation{\institution{National Taiwan University}\country{Taipei}}
                          \email{hsinpo@ntu.edu.tw}
                                      
\begin{abstract}
    In a large database system, upper-bounding the cardinality of a join
    query is a crucial task called \emph{pessimistic cardinality estimation}.
    Recently, Abo Khamis, Nakos, Olteanu, and Suciu unified related works
    into the following dexterous framework. Step 1: Let $(X_1, \dotsc, X_n)$
    be a random row of the join, equating $H(X_1, \dotsc, X_n)$ to the log of
    the join cardinality.  Step 2: Upper-bound $H(X_1, \dotsc, X_n)$ using
    Shannon-type inequalities such as $H(X, Y, Z) \le H(X) + H(Y|X) +
    H(Z|Y)$.  Step 3: Upper-bound $H(X_i) + p H(X_j | X_i)$ using the
    $p$-norm of the degree sequence of the underlying graph of a relation.

    While old bound in step 3 count ``claws $\in$'' in the underlying graph,
    we proposed \emph{ambidextrous} bounds that count ``claw pairs
    ${\ni}\!{-}\!{\in}$''.  The new bounds are provably not looser and
    empirically tighter: they overestimate by $x^{3/4}$ times when the old
    bounds overestimate by $x$ times.  An example is counting friend triples
    in the \texttt{com-Youtube} dataset, the best dexterous bound is $1.2
    \cdot 10^9$, the best ambidextrous bound is $5.1 \cdot 10^8$, and the
    actual cardinality is $1.8 \cdot 10^7$.
\end{abstract}

\maketitle

\begin{figure}
    \centering
    \tikzset{every picture/.append style={baseline=(current bounding box)}}
    \begin{tikzpicture} [scale=0.8, every node/.style={scale=0.8}]
        \draw
            (90: 0.6) circle (1.4)
            (210: 0.6) circle (1.4)
            (-30: 0.6) circle (1.4)
            (0, 0) node {$I(X; Y; Z)$}
            (90: 1.5) node {$H(Y | XZ)$}
            (210: 1.5) node [rotate=-45] {$H(X | YZ)$}
            (-30: 1.5) node [rotate=45] {$H(Z | XY)$}
            (150: 1) node [rotate=60] {$I(X; Y | Z)$}
            (30: 1) node [rotate=-60] {$I(Y; Z | X)$}
            (-90: 1) node {$I(X; Z | Y)$}
        ;
    \end{tikzpicture}
    \begin{tikzpicture} [scale=0.8, red!40!black]
        \draw [line width=0.8pt, postaction={fill=.!5}, yshift=1cm]
            (90: 0.3) circle (0.7) node {$1$};
        \draw [line width=0.8pt, postaction={fill=.!5}, yshift=0cm]
            (210: 0.3) circle (0.7) node {$1$};
        \draw [line width=0.8pt, postaction={fill=.!5}, yshift=-1.5cm]
            (-30: 0.3) circle (0.7) node {$1$};
        \draw (0, -2.5) node [below] {trivial bound \eqref{triple}};
    \end{tikzpicture}
    \hfill
    \begin{tikzpicture} [scale=0.8, yellow!40!black]
        \draw [line width=0.8pt, postaction={fill=.!5}, yshift=2cm]
            (210: 0.3) circle (0.7) (90: 0.3) circle (0.7)
            (150: 0.2) node {$1/2$};
        \draw [line width=0.8pt, postaction={fill=.!5}, yshift=0cm]
            (90: 0.3) circle (0.7) (-30: 0.3) circle (0.7)
            (30: 0.2) node {$1/2$};
        \draw [line width=0.8pt, postaction={fill=.!5}, yshift=-1.5cm]
            (210: 0.3) circle (0.7) (-30: 0.3) circle (0.7)
            (270: 0.2) node {$1/2$};
        \draw (0, -2.5) node [below] {AGM \eqref{half}};
    \end{tikzpicture}
    \hfill
    \begin{tikzpicture} [scale=0.8, green!40!black]
        \begin{scope} [yshift=-0.5cm]
            \draw [fill=.!5] (90: 0.3) circle (0.7);
            \fill [white] (-30: 0.3) circle (0.7);
            \clip (90: 0.3) circle (0.7);
            \draw (-30: 0.3) circle (0.7) (105: 0.7) node {$1$};
        \end{scope}
        \draw [line width=0.8pt, postaction={fill=.!5}, yshift=-1.5cm]
            (210: 0.3) circle (0.7) (-30: 0.3) circle (0.7)
            (270: 0.2) node {$1$};
        \draw (0, -2.5) node [below, align=center]
            {PANDA \eqref{PANDA}};
    \end{tikzpicture}
    \hfill
    \begin{tikzpicture} [scale=0.8, cyan!40!black]
        \draw [fill=.!5, yshift=2cm]
            (90: 0.3) circle (0.7) (75: 0.7) node {$2/3$};
        \draw [fill=.!5, yshift=2cm]
            (210: 0.3) circle (0.7) node {$1/3$};
        \draw [fill=.!5, yshift=0cm]
            (-30: 0.3) circle (0.7) (-45: 0.7) node {$2/3$};
        \draw [fill=.!5, yshift=0cm]
            (90: 0.3) circle (0.7) node {$1/3$};
        \draw [fill=.!5, yshift=-1.5cm]
            (210: 0.3) circle (0.7) (195: 0.7) node {$2/3$};
        \draw [fill=.!5, yshift=-1.5cm]
            (-30: 0.3) circle (0.7) node {$1/3$};
        \draw (0, -2.5) node [below] {PANDA \eqref{222}};
    \end{tikzpicture}
    \hfill
    \begin{tikzpicture} [scale=0.8, blue!40!black]
        \draw [fill=.!5, yshift=2cm]
            (90: 0.3) circle (0.7) (75: 0.7) node {$1/2$};
        \draw [fill=.!5, yshift=2cm]
            (210: 0.3) circle (0.7) node {$1/6$};
        \draw [fill=.!5, yshift=0cm]
            (90: 0.3) circle (0.7) (105: 0.7) node {$1/2$};
        \draw [fill=.!5, yshift=0cm]
            (-30: 0.3) circle (0.7) node {$1/6$};
        \draw [fill=.!5, yshift=-1.5cm]
            (210: 0.3) circle (0.7) (195: 0.7) node {$2/3$};
        \draw [line width=0.8pt, postaction={fill=.!5}, yshift=-1.5cm]
            (210: 0.3) circle (0.7) (-30: 0.3) circle (0.7)
            (270: 0.2) node {$5/6$};
        \draw (0, -2.5) node [below] {$\ell_p$-bound \eqref{335}};
    \end{tikzpicture}
    \hfill
    \begin{tikzpicture} [scale=0.8, magenta!40!black]
        \draw [fill=.!5, yshift=2cm]
            (210: 0.3) circle (0.7) (90: 0.3) circle (0.7)
            (75: 0.7) node {$4/9$}
            (150: 0.2) node {$1/3$}
            (225: 0.7) node {$5/9$};
        \draw [fill=.!5, yshift=0cm]
            (90: 0.3) circle (0.7) (-30: 0.3) circle (0.7)
            (-45: 0.7) node {$4/9$}
            (30: 0.2) node {$1/3$}
            (105: 0.7) node {$5/9$};
        \draw [fill=.!5, yshift=-1.5cm]
            (-30: 0.3) circle (0.7) (210: 0.3) circle (0.7)
            (195: 0.7) node {$4/9$}
            (-90: 0.2) node {$1/3$}
            (-15: 0.7) node {$5/9$};
        \draw (0, -2.5) node [below] {new bound \eqref{345}};
    \end{tikzpicture}
    \tikzset{every picture/.append style={baseline=-0.5ex}}
    \newbox\vennbox\setbox\vennbox=\hbox{%
        \bfseries
        \begin{tikzpicture} [scale=0.15]
        \draw (0, 0) circle (1);
        \end{tikzpicture},
        \begin{tikzpicture} [scale=0.15]
            \draw (0, 0) circle (1) (1, 0) circle (1);
            \fill [white] (0, 0) circle (1cm-1pt) (1, 0) circle (1cm-1pt);
        \end{tikzpicture},
        \begin{tikzpicture} [scale=0.15]
            \draw (0, 0) circle (1);
            \fill [white] (1, 0) circle (1);
            \clip (0., 0) circle (1);
            \draw (1, 0) circle (1);
        \end{tikzpicture}\kern-0.2cm, and
        \begin{tikzpicture} [scale=0.15]
            \draw (0, 0) circle (1);
            \draw [fill=white] (1, 0) circle (1);
        \end{tikzpicture}%
    }
    \newbox\newblock\setbox\newblock=\hbox{%
        \begin{tikzpicture} [scale=0.15]
            \draw (0, 0) circle (1) (1, 0) circle (1);
        \end{tikzpicture}
    }
    \Description{
        A visualization of \eqref{triple}, \eqref{half}, \eqref{PANDA},
        \eqref{222}, \eqref{335}, and \eqref{345} as coverings of the entropy
        Venn diagram.
    }
    \caption{
        Bounding $\#_\Delta$ can be viewed as a fractional covering problem
        on the entropy Venn diagram [left most].  The building blocks are
        \usebox\vennbox, which correspond to \eqref{p=0}, \eqref{p=1},
        \eqref{p=oo}, and \eqref{p=p}, respectively.  Our contribution can be
        viewed as inventing a new building block \usebox\newblock, which
        corresponds to \eqref{p1q}.  This generates new bounds such as
        \eqref{345}.  See Lemma~\ref{lem:venn} to learn how to generate more.
        Note that this Venn diagram viewpoint is not valid for four or more
        variables, as the signs of intersection information terms are not
        clear.
    }                                                        \label{fig:venn}
\end{figure}

\section{Introduction}

    In a large database, joining multiple relations is a basic operation, yet
    it could be costly to execute.  Often a query optimizer wants to know if
    one execution plan, say $(R \land S) \land T$, is better than the other,
    say $R \land (S \land T)$.  Therefore, estimating the costs of different
    plans, primarily the cardinalities of intermediate relations, helps
    planning and resource allocation \cite{PKB20,LBP21,KPN22}.  Many works
    had invented and studied various techniques to estimate the cardinality
    of a join, such as sampling and sketching.  See \cite{LGM15} for the
    survey that proposed the join order benchmark (JOB), \cite{HWW21} for the
    STATS benchmark, and \cite{SuL20} for subgraph matching.  Machine
    learning \cite{SZS21,WQW21,KJS22,DZZ24} is also a promising direction,
    though it is not a universal panacea.

    The challenge is that an estimate can undershoot or overshoot by an
    arbitrary amount \cite{LDN23}.  So naturally we seek one-sided bounds on
    the join cardinality that are tight and cheap.  The lower-bound side is
    called \emph{optimistic cardinality estimation} and the upper-bound side
    \emph{pessimistic}.  Together they sanity-check each other \cite{CHW22}.
    This paper works on the pessimistic side.  Readers are referred to
    \cite{ADO25,KNO25} for surveys and \cite{AHS25} (utilizes fast matrix
    multiplication), \cite{DAB25} (planning for tensor computations)
    \cite{IMN25} (polynomial-time bound-optimizer), \cite{UoL25} (generalizes
    entropic techniques to quantum), and \cite{WuS25} (parallel
    implementation of query execution) for very recent works not yet
    surveyed.
    
    To demonstrate progress of pessimistic cardinality estimation, let us
    consider counting triangles among three binary relations
    \[
        \#_\Delta \coloneqq |R(A, B) \land S(B, C) \land T(C, A)|
        = \left| \left\{ (a, b, c) :
            \substack{
                R(a, b) \text{ and} \\
                S(b, c) \text{ and} \\
                T(c, a) \phantom{\text{ and}}
            }
        \right\} \right|
    \]
    as an example.  The starting point is that the number of triangles is at
    most the number of free triples
    \begin{equation}
        \#_\Delta \le |A| \cdot |B| \cdot |C|.                 \label{triple}
    \end{equation}
    This could overshoot by orders of magnitude if the relations $R$, $S$,
    and $T$ are sparse.
    
    To harvest sparsity, one notices that a triangle is determined by two
    edges.  Hence,
    \begin{equation}
        \#_\Delta \le
        |R(A, B)| |S(B, C)|,
        |S(B, C)| |T(C, A)|,
        |T(C, A)| |R(A, B)|.                                    \label{2edge}
    \end{equation}
    Utilizing Shearer's inequality, Atserias, Grohe, and Marx
    \cite{GrM14,AGM13} generalized \eqref{2edge} (see also \cite{FrK98} for
    its math counterpart).  The new estimates read
    \begin{equation}
        \#_\Delta \le
        |R(A, B)|^u \cdot |S(B, C)|^v \cdot |T(C, A)|^w           \label{AGM}
    \end{equation}
    where the weights (say $u$ and $v$) associated to an attribute (say $B$)
    sum to $1$ or more.  In particular, $(u, v, w) = (0, 1, 1)$ gives back
    the second of \eqref{2edge} and $(u, v, w) = (1/2, 1/2, 1/2)$ gives a
    nontrivial one
    \begin{equation}
        \#_\Delta \le |R(A, B)|^{1/2} \cdot
        |S(B, C)|^{1/2} \cdot |T(C, A)|^{1/2}.                   \label{half}
    \end{equation}
    These are referred to as the \emph{AGM bounds}.

    Despite good estimates like \eqref{AGM}, we might still have to run
    nested for-loops in one of the following ways to compute the join $R(A,
    B) \land S(B, C) \land T(C, A)$:
    \begin{itemize}
        \item For $a \in A$ \{ for $b \in B$ s.t.\ $R(a, b)$
            \{ for $c \in C$ s.t.\ ... \{ ... \} \} \}
        \item For $b \in B$ \{ for $c \in C$ s.t.\ $S(b, c)$
            \{ for $a \in A$ s.t.\ ... \{ ... \} \} \}
        \item For $c \in C$ \{ for $a \in A$ s.t.\ $T(c, a)$
            \{ for $b \in B$ s.t.\ ... \{ ... \} \} \}
        \item For $a \in A$ \{ for $c \in C$ s.t.\ $T(c, a)$
            \{ for $b \in B$ s.t.\ ... \{ ... \} \} \}
        \item For $b \in B$ \{ for $a \in A$ s.t.\ $R(a, b)$
            \{ for $c \in C$ s.t.\ ... \{ ... \} \} \}
        \item For $c \in C$ \{ for $b \in B$ s.t.\ $S(b, c)$
            \{ for $a \in A$ s.t.\ ... \{ ... \} \} \}
    \end{itemize}
    And it is likely that all six choices require more time than what is
    estimated, and thus more time than the size of the final table.  What
    makes AGM bounds so profound is that this is not the case.  It is
    possible to design an informed enumeration plan to match the least upper
    bound over the choices of $(u, v, w)$.  See NPRR \cite{NRR14}, LFTJ
    \cite{Vel14}, and Generic-join \cite{NPR18} for more information on the
    bound-matching track.

    The next breakthrough is \emph{chain bounds} \cite{ANS16} and \emph{PANDA
    bounds} \cite{ANS17}.  They harvest the fact that the two edges must
    share the same vertex, and hence choosing the second edge is constrained
    by the degree of the shared vertex.  This leads to
    \begin{align}
        \#_\Delta \le
        |R(A, B)| \cdot \max_{a\in A} \deg_T(a)\;,\;
        |R(A, B)| \cdot \max_{b\in B} \deg_S(b)\;,\;                 \notag\\
        |S(B, C)| \cdot \max_{b\in B} \deg_R(b)\;,\;
        |S(B, C)| \cdot \max_{c\in C} \deg_T(c)\;,\;                 \notag\\
        |T(C, A)| \cdot \max_{c\in C} \deg_S(c)\;,\;
        |T(C, A)| \cdot \max_{a\in A} \deg_R(a)\;.\;            \label{PANDA}
    \end{align}
    Here, $\deg_R(a)$ is the number of $b \in B$ that satisfy $R(a, b)$, and
    other degrees are defined similarly.

    Practical implementation of said bounds include MOLP \cite{CHW22} and
    \textsc{BoundSketch} \cite{CBS19,HHH21}.  Both \cite{ANS16} and
    \cite{ANS17} also showed that some joining algorithms use the time
    predicted by the best bound.  See \cite{GLV12,GoT17,IMN25} for special
    treatments when $\max \deg_R(a) = 1$.  This is called a \emph{functional
    dependency} as such a relation can be seen as a partial function: $R(a)$
    is either $b$ (if $b \in B$ is the only possibility such that $R(a, b)$)
    or undefined.

    An apparent downside of using max-degree is that a relation might have a
    few unrepresentative vertices with very high degrees.  But in real-world
    recommendation-related applications, we are only interested in the top
    $10$ restaurants, movies, and merchandises.  This means that the
    max-degree can be \emph{capped} at $10$ \cite{AFP09,ACK11,ALK13}, making
    estimates like \eqref{PANDA} very useful.

\subsection{The unification of bounds}

    Abo Khamis, Nakos, Olteanu, and Suciu \cite{ANO24} made huge progress by
    generalizing \eqref{triple}--\eqref{PANDA} to an infinite family of
    bounds called $\ell_p$-bounds.  Two new instances include
    \begin{equation}
        \#_\Delta^3 \le \sum_{a\in A}\deg_R(a)^2 \cdot
        \sum_{b\in B}\deg_S(b)^2 \cdot \sum_{c\in C}\deg_T(c)^2   \label{222}
    \end{equation}
    and
    \begin{equation}
        \#_\Delta^6 \le \sum_{a\in A}\deg_R(a)^3 \cdot
        \sum_{c\in C}\deg_S(c)^3 \cdot |T(C, A)|^5                \label{335}
    \end{equation}
    Now the problem boils down to enumerating this rich family of bounds and
    returning a competitively small, if not the smallest, one.  An algorithm
    called \textsc{LpBound} implemented this and is described in the work of
    Zhang et al.\ \cite{ZMA25}.  A similar variant is called
    \textsc{SafeBound} and is described by Deeds et al.\
    \cite{DSB22,DSB23,DSB25}.  See also the website of Mayer et al.
    \cite{MZA25} for live demo.

    Note that the right-hand sides of \eqref{triple}--\eqref{335} all have
    one thing in common: $|A| = \sum_{a\in A} 1$ is the $0$-norm, $|R(A, B)|
    = \sum_{a\in A} \deg_R(a)$ is the $1$-norm, $\sum_{a\in A} \deg_R(a)^2$
    and $\sum_{a\in A} \deg_R(a)^3$ are the $2$-norm squared and the $3$-norm
    cubed, and $\max_{a\in A} \deg_R(a)$ is the $\infty$-norm of the
    \emph{degree sequence} $\{\deg_R(a)\}_{a\in A}$.  Abo Khamis et al.\
    \cite{ANO24} unified all these bounds using a single building block.

    The meta reason degree sequences are useful here is their relation to
    join cardinalities via entropy inequalities.  For instance, for any
    random pair $(X, Y) \in R(A, B)$,
    \begin{align}
        H(X) & \le \ln |A|,                                       \label{p=0}
        \\ H(X, Y) & \le \ln |R(A, B)|,                           \label{p=1}
        \\ H(Y|X) & \le \ln \max_{a\in A} \deg_R(a).             \label{p=oo}
    \end{align}
    Abo Khamis et al.\ \cite{ANO24} unified them as
    \begin{equation}
        H(X) + p H(Y|X) \le \ln \sum_{a\in A} \deg_R(a)^p         \label{p=p}
    \end{equation}
    for any $p \ge 0$, making \eqref{p=0}, \eqref{p=1}, and \eqref{p=oo}
    special cases at $p = 0$, $1$, and $\infty$, respectively.  See
    Figure~\ref{fig:venn} for a visualization of \eqref{triple}--\eqref{335}
    as fractional coverings of the entropy Venn diagram using the left-hand
    sides of \eqref{p=0}--\eqref{p=p}.

    When $p$ happens to be an integer, say $p = 3$, \eqref{p=p} has a
    combinatorial interpretation: The left-hand side is the entropy of a
    random quadruple $(X, Y_1, Y_2, Y_3)$ such that the marginal distribution
    of each $(X, Y_i)$ looks like a copy of $(X, Y)$.  The right-hand side of
    \eqref{p=p} is the number of quadruples $(a, b_1, b_2, b_3) \in A \times
    B^3$ such that each $(a, b_i)$ is in $R(A, B)$.  In graph theory, the
    graph where one vertex connects to the other three, ``$\in$'', is called
    a \emph{claw}.  So \eqref{p=p} translates into that the entropy of a
    random claw is at most the logarithm of the number of claws.  See
    Figure~\ref{fig:311} for enumeration of claws in a Z-shaped relation $\ZZ
    \coloneqq \{ (1, 2), (3, 2), (3, 4) \}$.
    See Figure~\ref{fig:113} for enumeration of claws in the same relation
    $\ZZ$ with the roots on the other side.
\begin{tikzpicture} [overlay]
    \path
        (0, 1) + (112.5:0.1) coordinate (1a)
        (0, 1) + (112.5:0.0) coordinate (1b)
        (0, 1) + (112.5:-.1) coordinate (1c)
        (0, 0) + (112.5:0.1) coordinate (3a)
        (0, 0) + (112.5:0.0) coordinate (3b)
        (0, 0) + (112.5:-.1) coordinate (3c)
        (1, 1) + (112.5:0.1) coordinate (2a)
        (1, 1) + (112.5:0.0) coordinate (2b)
        (1, 1) + (112.5:-.1) coordinate (2c)
        (1, 0) + (112.5:0.1) coordinate (4a)
        (1, 0) + (112.5:0.0) coordinate (4b)
        (1, 0) + (112.5:-.1) coordinate (4c)
    ;
    \path [save path=\zpath]
        (1b) circle (4pt) -- (2b) circle (4pt) --
        (3b) circle (4pt) -- (4b) circle (4pt);
    \gdef\Zpath{\zpath}
\end{tikzpicture}

\begin{figure}
    \centering
    \def\claw#1#2#3#4;{%
        \begin{tikzpicture}
            \draw [use path=\Zpath, line width=4pt, draw=yellow!60!white];
            \draw [red!20!blue] (#1b) -- (#2a);
            \draw [red!50!blue] (#1b) -- (#3b);
            \draw [red!80!blue] (#1b) -- (#4c);
            \draw (0.5, -0.1) node [below] {$(#1, #2, #3, #4)$};
            \draw (-0.3, -0.3) (1.3, 1.3);
        \end{tikzpicture}%
    }%
    \claw 1 222;
    \claw 3 222;
    \claw 3 224;
    \claw 3 242;
    \claw 3 244;
    \claw 3 422;
    \claw 3 424;
    \claw 3 442;
    \claw 3 444;
    \Description{
        Nine quadruples that are compatible with the Z-shaped relation.
    }
    \caption{
        The number of quadruples $(a, b_1, b_2, b_3)$ such that $(a, b_1),
        (a, b_2), (a, b_3) \in \ZZ \coloneqq \{ (1, 2), (3, 2), (3, 4) \}$ is
        $\Mom{3}{\ZZ}{1} = 9$.
    }                                                         \label{fig:311}
\end{figure}

\begin{figure}
    \centering
    \def\claw#1#2#3#4;{%
        \begin{tikzpicture}
            \draw [use path=\Zpath, line width=4pt, draw=yellow!60!white];
            \draw [red!20!blue] (#1a) -- (#4b);
            \draw [red!50!blue] (#2b) -- (#4b);
            \draw [red!80!blue] (#3c) -- (#4b);
            \draw (0.5, -0.1) node [below] {$(#1, #2, #3, #4)$};
            \draw (-0.3, -0.3) (1.3, 1.3);
        \end{tikzpicture}%
    }%
    \claw 111 2;
    \claw 113 2;
    \claw 131 2;
    \claw 133 2;
    \claw 311 2;
    \claw 313 2;
    \claw 331 2;
    \claw 333 2;
    \claw 333 4;
    \Description{
        Nine quadruples that are compatible with the Z-shaped relation.
    }
    \caption{
        The number of quadruples $(a_1, a_2, a_3, b)$ such that $(a_1, b),
        (a_2, b), (a_3, b) \in \ZZ \coloneqq \{ (1, 2), (3, 2), (3, 4) \}$ is
        $\Mom{1}{\ZZ}{3} = 9$.
    }                                                         \label{fig:113}
\end{figure}

\subsection{Our contributions}

    The main contribution of this paper is to introduce \emph{bivariate
    moments}
    \[
        \Mom{p}{R(A, B)}{q} \coloneqq
        \sum_{(a,b)\in R} \deg_R(a)^{p-1} \deg_R(b)^{q-1}
    \]
    of a \emph{bi-degree sequence}
    $\{(\deg_R(a), \deg_R(b))\}_{(a,b)\in R(A,B)}$.
    Bivariate moments recover the old $p$-norms when $q = 1$:
    \begin{align*}
        & \Mom{0}{R(A, B)}{1}
        = \sum_{(a,b)\in R} \deg_R(a)^{-1}
        = \sum_{a\in A} 1 = |A|                     \tag{used in \eqref{p=0}}
        \\ & \Mom{1}{R(A, B)}{1}
        = \sum_{(a,b)\in R} 1 = |R(A, B)|           \tag{used in \eqref{p=1}}
        \\ & \Mom{p}{R(A, B)}{1}
        = \sum_{(a,b)\in R} \deg_R(a)^{p-1}
        = \sum_{a\in A} \deg_R(a)^p                 \tag{used in \eqref{p=p}}
        \\ & \lim_{p\to\infty} \; \Mom{p}{R(A, B)}{1}^{1/p}
        = \lim_{p\to\infty} \Bigl( \sum_{a\in A} \deg_R(a)^p \Bigr)^{1/p}
        = \max_{a\in A} \deg_A(a)                  \tag{used in \eqref{p=oo}}
    \end{align*}
    We then generalize \eqref{p=0}--\eqref{p=p} to
    \begin{equation}
        pH(Y|X) + I(X; Y) + q H(X|Y) \le \lnMom{p}{R(A, B)}{q} \label{p1q}
    \end{equation}
    for all $p, q \ge 1$, making \eqref{p=p} a special case at $q = 1$.  This
    provides new building blocks to bound $\#_\Delta$, for instance
    \begin{equation}
        \#_\Delta^3 \le
        \Mom{4/3}{R(A, B)}{5/3} \cdot
        \Mom{4/3}{S(B, C)}{5/3} \cdot
        \Mom{4/3}{T(C, A)}{5/3}.                                 \label{345}
    \end{equation}
    See Figure~\ref{fig:venn} for the fractional covering \eqref{345}
    induces.

    When $p$ and $q$ are integers, say $p = q = 2$, \eqref{345} has a
    combinatorial interpretation: The left-hand side is the entropy of a
    random sextuple $(X_1, X_2, Y, X, Y_1, Y_2)$ such that $(X_1, Y)$, $(X_2,
    Y)$, $(X, Y)$, $(X, Y_1)$, and $(X, Y_2)$ all look like copies of $(X,
    Y)$.  The right-hand side of \eqref{p1q} is the number of sextuples
    $(a_1, a_2, b, a, b_1, b_2)$ such that $(a_1, b)$, $(a_2, b)$, $(a, b)$,
    $(a, b_1)$, and $(a, b_2)$ are all in $R(A, B)$.  So \eqref{p1q}
    translates into that the entropy of a random claw pair
    ``${\ni}\!{-}\!{\in}$'' is at most the logarithm of the number of claw
    pairs.  See Figures \ref{fig:111}--\ref{fig:313} for enumerations of
    claw pairs in the Z-shaped relation $\ZZ$.

\begin{figure}
    \centering
    \def\claw#1#2;{%
        \begin{tikzpicture}
            \draw [use path=\Zpath, line width=4pt, draw=yellow!60!white];
            \draw [red!50!blue] (#2b) -- (#1b);
            \draw (0.5, -0.1) node [below] {$(#1, #2)$};
            \draw (-0.3, -0.3) (1.3, 1.3);
        \end{tikzpicture}%
    }%
    \claw 2 1;
    \claw 2 3;
    \claw 4 3;
    \Description{
        Three pairs that are compatible with the Z-shaped relation.
    }
    \caption{
        The number of pairs $(b, a)$ such that $(a, b)\in \ZZ \coloneqq \{
        (1, 2), (3, 2), (3, 4) \}$ is $\Mom{1}{\ZZ}{1} = 3$.
    }                                                         \label{fig:111}
\end{figure}

\begin{figure}
    \centering
    \def\claw#1#2#3#4;{%
        \begin{tikzpicture}
            \draw [use path=\Zpath, line width=4pt, draw=yellow!60!white];
            \draw [red!20!blue] (#1a) -- (#2a);
            \draw [red!50!blue] (#3c) -- (#2a);
            \draw [red!80!blue] (#3c) -- (#4c);
            \draw (0.5, -0.1) node [below] {$(#1, #2, #3, #4)$};
            \draw (-0.3, -0.3) (1.3, 1.3);
        \end{tikzpicture}%
    }%
    \claw 1 2 1 2;
    \claw 1 2 3 2;
    \claw 1 2 3 4;
    \claw 3 2 1 2;
    \claw 3 2 3 2;
    \claw 3 2 3 4;
    \claw 3 4 3 2;
    \claw 3 4 3 4;
    \Description{
        Eight quadruples that are compatible with the Z-shaped relation.
    }
    \caption{
        The number of quadruples $(a_1, b, a, b_1)$ such that $(a, b), (a,
        b_1), (a_1, b) \in \ZZ \coloneqq \{ (1, 2), (3, 2), (3, 4) \}$ is
        $\Mom{2}{\ZZ}{2} = 8$.
    }                                                         \label{fig:212}
\end{figure}

    We call \eqref{p=p} \emph{dexterous} for them counting claws
    ``$\in$''.  We call the new bounds \eqref{p1q} \emph{ambidextrous} for
    them counting claw pairs ``${\ni}\!{-}\!{\in}$''.  Ambidextrous upper
    bounds are provably tighter (or at least not looser) in the sense of
    Hölder's inequality
    \begin{equation}
        \Mom{p}{R}{1}^w \cdot \Mom{1}{R}{q}^{1-w}
        \ge \Mom{wp+(1-w)}{R}{w+(1-w)q}.                     \label{holder}
    \end{equation}
    That is to say, if some variant of \eqref{triple}--\eqref{335} uses the
    left-hand side of \eqref{holder} to upper-bound $\#_\Delta$, we can
    instead use the right-hand side of \eqref{holder} and guarantee an
    equally good, if not better, bound.  See Figure~\ref{fig:holder} for a
    graphical demonstration of $\Mom{3}{\ZZ}{1}^{1/2} \cdot
    \Mom{1}{\ZZ}{3}^{1/2} > \Mom{2}{\ZZ}{2}$.  Empirically, estimating
    triangles is now 48\%, quadrilaterals $2.6$-fold, and pentagons
    $4.2$-fold more precise (see Figure~\ref{fig:average} for context).

    As a bonus contribution, we show that finding the best bound, dexterous
    or ambidextrous, is a convex optimization problem.  This is due to the
    following generalization of \eqref{holder}
    \begin{equation}
        \Mom{p}{R}{q}^w \cdot \Mom{r}{R}{s}^{1-w}
        \ge \Mom{wp+(1-w)r}{R}{wq+(1-w)s}.                   \label{holder2}
    \end{equation}
    Convexity simplifies the optimization process in several ways.  For one,
    we can take advantages of existing black-box algorithms.  We can also
    sample the target function at several points and treat the remainder as a
    linear programming problem; this is equivalent to using linear splines
    and is the approach earlier works \cite{DSB22,DSB23,DSB25,ZMA25,MZA25}
    take.  Higher-degree splines approximate convex function even better with
    fewer samples; this is good news as evaluating $\lnMom{p}{R}{q}$ can be
    very I/O intensive.

    The paper is organized as below.  Section~\ref{sec:old} reviews the
    existing bounds and the framework built upon them.  Section~\ref{sec:new}
    explains the new ambidextrous bounds and some consequences.
    Section~\ref{sec:simu} presents simulation results.

\begin{figure}
    \centering
    \def\claw#1#2#3#4#5#6;{%
        \begin{tikzpicture}
            \draw [use path=\Zpath, line width=4pt, draw=yellow!60!white];
            \draw [red!10!blue] (#1a) -- (#3a);
            \draw [red!30!blue] (#2b) -- (#3a);
            \draw [red!50!blue] (#4c) -- (#3a);
            \draw [red!70!blue] (#4c) -- (#5b);
            \draw [red!90!blue] (#4c) -- (#6c);
            \draw (0.5, -0.1) node [below, scale=0.8]
                {$(#1, #2, #3, #4, #5, #6)$};
            \draw (-0.3, -0.3) (1.3, 1.3);
        \end{tikzpicture}%
    }%
    \claw 11 2 1 22;
    \claw 13 2 1 22;
    \claw 31 2 1 22;
    \claw 33 2 1 22;
    \claw 11 2 3 22;
    \claw 11 2 3 24;
    \claw 11 2 3 42;
    \claw 11 2 3 44;
    \claw 13 2 3 22;
    \claw 13 2 3 24;
    \claw 13 2 3 42;
    \claw 13 2 3 44;
    \claw 31 2 3 22;
    \claw 31 2 3 24;
    \claw 31 2 3 42; 
    \claw 31 2 3 44;
    \claw 33 2 3 22;
    \claw 33 2 3 24;
    \claw 33 2 3 42;
    \claw 33 2 3 44;
    \claw 33 4 3 22;
    \claw 33 4 3 24;
    \claw 33 4 3 42;
    \claw 33 4 3 44;
    \Description{
        Twenty-four sextuples that are compatible with the Z-shaped relation.
    }
    \caption{
        The number of sextuples $(a_1, a_2, b, a, b_1, b_2)$ satisfying
        $\ZZ(a_1, b)$, $\ZZ(a_2, b)$, $\ZZ(a, b)$, $\ZZ(a, b_1)$, and $\ZZ(a,
        b_2)$ is $\Mom{3}{\ZZ}{3} = 24$.
    }                                                         \label{fig:313}
\end{figure}

\section{The Old Dexterous Framework}                         \label{sec:old}

    This section reviews Abo Khamis, Nakos, Olteanu, and Suciu's dexterous
    $\ell_p$-bounds \cite{ANO24}.  They offer bounds like
    \eqref{triple}--\eqref{335} in a systematic way.  This general framework
    consists of three steps.

\subsection{A three-step framework}                        \label{sec:3steps}

    Step 0 (problem setup): There are several attributes, for instance, $A$
    is the set of sellers, $B$ is the set of merchandises, and $C$ is the set
    of buyers.  Between any pair of attributes there can be relations.  For
    instance, say $R(A, B)$ records if a seller sells a merchandise; it is a
    table whose rows are pairs $(a, b) \in A \times B$ where $a$ sells $b$.
    Between $B$ and $C$ there can a relation $S(B, C)$ that records if a
    buyer wants a merchandise.  And between $C$ and $A$ can be a relation
    $T(C, A)$ that records if a buyer lives near a seller.  Now, say we are
    interested in the join query $R(A, B) \land S(B, C) \land T(C, A)$.  The
    result of this query should be a table whose rows are triples $(a, b, c)$
    where $a$ sells $b$ that $c$ wants and they live near each other.  The
    challenge here is to estimate the number of rows before actually forming
    the table.

    Step 1: Imagine that the end table is there and $(X, Y, Z)$ is a row
    sampled uniformly at random.  Here, $X$ is the value in the $A$ column,
    $Y$ is the value in the $B$ column, and $Z$ is the value in the $C$
    column.  Let
    \[
        H(X, Y, Z) \coloneqq
        \sum_{x,y,z} P(x, y, z) \ln \frac{1}{P(x, y, z)}
    \]
    be the Shannon entropy of $(X, Y, Z)$, where $P$ denotes pmf.  Since $(X,
    Y, Z)$ is uniform, $H(X, Y, Z)$ coincides with $\ln \#_\Delta$, the
    logarithm of the join cardinality.

    Step 2: Now upper-bounding the join cardinality is equivalent to
    upper-bounding $H(X, Y, Z)$.  And upper-bounding $H(X, Y, Z)$ is a
    well-studied problem in information theory.  For instance, entropy is
    subadditive: $H(X, Y, Z) \le H(X) + H(Y) + H(Z)$.  Entropy also has chain
    rule and submodularity: $H(X, Y, Z) = H(X, Y) + H(Z | X, Y) \le H(X, Y) +
    H(Z|Y)$.  There are other off-the-shelf inequalities, such as Han's
    inequality $2 H(X, Y, Z) \le H(X, Y) + H(Y, Z) + H(Z, X)$ and Shearer's
    inequality.  Collectively these type of inequalities are called
    \emph{Shannon-type inequalities}.

    Step 3: As the previous step decomposes $H(X, Y, Z)$ into local terms
    like $H(X)$, $H(X, Y)$, and $H(Y|X)$, the last step is to bound local
    terms using invariants of the input relations.  Some examples include
    \eqref{p=0}--\eqref{p=p}, and \eqref{p1q} is our addition to this family.

    For the remainder of this section, we explain steps 2 and 3 in two
    separate subsections.  We then demonstrate how those lead to 
    \eqref{triple}--\eqref{335} in the last subsection.

\begin{figure}
    \centering
    \begin{tikzpicture}
        \begin{axis} [
            view={-30}{25}, clip=false, zmode=log,
            xlabel=$p$, ylabel=$q$, zlabel=$\Mom{p}{\ZZ}{q}$,
            zlabel style={rotate=-90, xshift=10pt}, zmax=100
        ]
            \def\f(#1,#2){2^(#1-1)+2^(#2-1)+2^(#1+#2-2)}
            \def\g(#1){ln(#1)}
            \def\preparetriangles{
                \def\trianglebuffer{}
                \foreach\x in{1,1.25,...,3.99}{
                    \foreach\y in{1,1.25,...,3.99}{
                        \computexyz&meta(\xa,\ya,\za)[\wa]=\f(\x,\y)
                        \computexyz&meta(\xb,\yb,\zb)[\wb]=\f(\x+0.25,\y)
                        \computexyz&meta(\xc,\yc,\zc)[\wc]=\f(\x,\y+0.25)
                        \pushtriangle{
                            (\xa,\ya,\za)[\wa]
                            (\xb,\yb,\zb)[\wb]
                            (\xc,\yc,\zc)[\wc]
                        }
                    }
                }
            }
            \addplottriangles
            \tikzset{
                1/.style={draw=yellow!40!black, dotted},
                2/.style={yellow!80!black, node contents=$\bullet$},
                3/.style={yellow!60!black}
            }
            \draw [1] (axis description cs: 0.2, 1)
                node (X) [3] {Figure~\ref{fig:313}}
                (X) -- (3, 3, 24) node [2];
            \draw [1] (axis description cs: 0, 0.1)
                node (X) [3] {Figure~\ref{fig:113}}
                (X) -- (1, 3, 9) node [2];
            \draw [1] (axis description cs: 0.9, 0)
                node (X) [3] {Figure~\ref{fig:311}}
                (X) -- (3, 1, 9) node [2];
            \draw [1] (axis description cs: 0.5, -0.1)
                node (X) [3] {Figure~\ref{fig:212}}
                (X) -- (2, 2, 8) node [2];
            \draw [1] (axis description cs: 0.2, -0.1)
                node (X) [3] {Figure~\ref{fig:111}}
                (X) -- (1, 1, 3) node [2];
        \end{axis}
    \end{tikzpicture}
    \Description{
        A plot of $\Mom{p}{\ZZ}{q}$.
    }
    \caption{
        A linear--linear--log plot of $\Mom{p}{\ZZ}{q}$.  Notice that the
        point corresponding to Figure~\ref{fig:212} is below the midpoints of
        the two corresponding to Figures \ref{fig:311} and \ref{fig:113}.
        This is explained by \eqref{holder}.
    }                                                      \label{fig:holder}
\end{figure}

\subsection{Shannon-type inequalities}

    Shannon-type inequalities are a family of inequalities that are
    straightforward consequences of Jensen's inequality.  They are
    parameterized by subsets of variables and are good for generating upper
    bounds on an entropy, say $H(X, Y, Z)$, using simpler entropy terms such
    as $H(X)$, $H(X, Y)$, and $H(Y|X)$.  Note that there are also entropy
    inequalities not in the Shannon family.  Those are hard to enumerate and
    are of little use in a programming context like this.

    The one ``seed'' that generates all Shannon-type inequalities is
    submodularity.  Its proofs can be found in standard textbooks.
    
    \begin{lemma} [submodularity]                             \label{lem:sub}
        Let $X$, $Y$, and $Z$ be three random variables that may or may not
        correlate.  Then
        \[ H(X, Y) + H(Y, Z) \ge H(X, Y, Z) + H(Y). \]
        A common paraphrase is $H(Z|Y) \ge H(Z | X, Y)$, wherein $H(Z|Y)$
        denotes the \emph{conditional entropy} and is defined to be $H(Y, Z)
        - H(Y)$.  Another common paraphrase is $I(X; Z | Y) \ge 0$, wherein
        $I(X; Z | Y)$ denotes the \emph{conditional mutual information} and
        is defined to be $H(Z|Y) - H(Z | X, Y)$.
    \end{lemma}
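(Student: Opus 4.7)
The plan is to reduce the submodularity inequality to the nonnegativity of Kullback--Leibler divergence, which itself will follow from a single application of Jensen's inequality. The first step will be bookkeeping: starting from the definitions $H(Z|Y) = H(Y,Z) - H(Y)$ and $I(X;Z|Y) = H(Z|Y) - H(Z|X,Y)$ recorded in the statement, I would verify the identity
\[
  H(X,Y) + H(Y,Z) - H(X,Y,Z) - H(Y) = I(X;Z|Y),
\]
so that the three paraphrases in the lemma collapse into a single target, namely $I(X;Z|Y) \ge 0$.

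The second step is to rewrite $I(X;Z|Y)$ as an average of divergences. Writing $P$ for the joint pmf and conditioning on $Y = y$, a direct expansion gives
\[
  I(X;Z|Y) = \sum_y P(y) \sum_{x,z} P(x, z \mid y)
    \ln \frac{P(x, z \mid y)}{P(x \mid y) \, P(z \mid y)},
\]
whose inner sum is the Kullback--Leibler divergence between the conditional joint $P(X, Z \mid Y = y)$ and the product of its conditional marginals.

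The real content of the argument is the third step: showing that $\sum_i p_i \ln(p_i / q_i) \ge 0$ whenever $p$ and $q$ are pmfs on a common finite support. This is where Jensen's inequality for the convex function $-\ln$ enters, pulling the $p$-average inside the logarithm and leaving $-\ln \sum_i q_i = 0$. I expect no serious obstacle here; the only subtlety is handling terms with $P(x,y,z) = 0$, which is dispatched by the standard convention $0 \ln 0 = 0$ and by restricting every sum to the support of $P$. Averaging the pointwise-in-$y$ divergence bound against the nonnegative weights $P(y)$ then delivers $I(X;Z|Y) \ge 0$ and hence the submodularity inequality, completing the proof.
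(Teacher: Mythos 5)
Your proposal is correct and follows essentially the same route as the paper: both reduce the inequality to $I(X;Z|Y)\ge 0$, group terms by $y$, and apply Jensen's inequality to the inner sum over $(x,z)$ — your ``nonnegativity of KL divergence'' step is exactly the paper's Jensen step with the sign of the logarithm flipped. No gaps; the $0\ln 0$ convention you mention is the only bookkeeping detail and you handle it correctly.
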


    \begin{proof}
        Submodularity is a well-documented property of entropy; the following
        proof is included for completeness.  We group $(x, y, z)$ by $y$
        before applying Jensen's inequality to cancel $P(x, z | y)$:
        \begin{align*}
            \kern2em & \kern-2em
            H(X, Y, Z) + H(Y) - H(X, Y) - H(Y, Z) 
            \\ & = \sum_{x,y,z} P(x, y, z)
                \ln \frac{P(x, y) P(y, z)}{P(x, y, z) P(y)}
            \\ & = \sum_y P(y) \sum_{x,z} P(x, z | y)
                \ln \frac{P(x|y) P(z|y)}{P(x, z | y)}
            \\ & \le \sum_y P(y) \ln \sum_{x,z} P(x, z | y) \cdot
                \frac{P(x|y) P(z|y)}{P(x, z | y)}
            \\ & = \sum_y P(y) \ln \sum_{x,z} P(x|y) P(z|y)
            \\ & = \sum_y P(y) \ln(1) = 0
        \end{align*}
        This proves submodularity.
    \end{proof}

    Some special cases of submodularity have names.  For instance,
    monotonicity refers to $H(X, Y) \ge H(Y)$.  It is proved by letting $X =
    Z$ to get $2 H(X, Y) = H(X, Y) + H(X, Z) \ge H(X, Y, Z) + H(Y) = H(X, Y)
    + H(Y)$.  A paraphrase of this is $H(Y|X) \ge 0$.  Nonnegativity refers to
    $H(X) \ge 0$.  It is proved by letting $Y = 0$ in monotonicity.
    Subadditivity refers to $H(X, Z) \le H(X) + H(Z)$.  It is proved by
    letting $Y = 0$ to get $H(X) + H(Z) = H(X, Y) + H(Y, Z) \ge H(X, Y, Z) +
    H(Y) = H(X, Z)$.  A paraphrase of this is $H(Z|X) \le H(Z)$.

\subsection{Proof of dexterous bounds}

    In this subsection, we prove \eqref{p=p}.  As warm-ups, we prove special
    cases \eqref{p=0}--\eqref{p=oo} first.  To prove \eqref{p=0}, apply
    Jensen's inequality to cancel $P(a)$:
    \[
        H(X) = \sum_{a\in A} P(a) \ln \frac{1}{P(a)}
        \le \ln \sum_{a\in A} P(a) \cdot \frac{1}{P(a)} = \ln |A|
    \]
    To prove \eqref{p=1}, apply Jensen's inequality to cancel $P(a, b)$:
    \begin{align*}
        H(X, Y)
        & = \sum_{(a,b)\in R} P(a, b) \ln \frac{1}{P(a, b)}
        \\ & \le \ln  \sum_{(a,b)\in R} P(a, b) \cdot \frac{1}{P(a, b)}
        = \ln |R(A, B)|
    \end{align*}
    To prove \eqref{p=oo}, apply Jensen's inequality to cancel $P(b|a)$:
    \begin{align*}
        H(Y|X)
        & = \sum_{a\in A} P(a) \sum_{b:R(a,b)} P(b|a) \ln \frac{1}{P(b|a)}
        \\ & \le \sum_{a\in A} P(a) \ln \sum_{b:R(a,b)}
            P(b|a) \cdot \frac{1}{P(b|a)}
        \\ & = \sum_{a\in A} P(a) \ln \deg_R(a)
        \le \max_{a\in A} \ln \deg_R(a)
    \end{align*}
    We are now ready to prove \eqref{p=p}.

    \begin{proposition}                                       \label{pro:p11}
        (Lemma~4.1 of \cite{ANO24}) Let $p \ge 0$.  Let $R(A, B)$ be a
        relation with column types $A$ and $B$.  Let $(X, Y) \in R(A, B)$ be
        any random pair.  Then $H(X) + p H(Y|X) \le \lnMom{p}{R(A, B)}{1}$,
        i.e., \eqref{p=p} holds.
    \end{proposition}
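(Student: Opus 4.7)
The plan is to follow the Jensen-based template of the warm-up proofs of \eqref{p=0}--\eqref{p=oo} and apply Jensen's inequality twice in sequence. The key observation is that the warm-up proof of \eqref{p=oo} already establishes a sharper intermediate bound
\[
    H(Y|X) \le \sum_{a \in A} P(a) \ln \deg_R(a),
\]
before weakening its right-hand side to $\ln \max_a \deg_R(a)$. Since $p \ge 0$, multiplying this by $p$ preserves the direction and converts $p H(Y|X)$ into a plain expectation over $a$.

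Adding $H(X) = \sum_a P(a) \ln(1/P(a))$ to both sides and combining logarithms yields
\[
    H(X) + p H(Y|X) \le \sum_{a \in A} P(a) \ln \frac{\deg_R(a)^p}{P(a)}.
\]
One further application of Jensen's inequality to the concave $\ln$, with weights $P(a)$, cancels the $P(a)$ inside the logarithm:
\[
    \sum_{a \in A} P(a) \ln \frac{\deg_R(a)^p}{P(a)}
    \le \ln \sum_{a \in A} P(a) \cdot \frac{\deg_R(a)^p}{P(a)}
    = \ln \sum_{a \in A} \deg_R(a)^p
    = \lnMom{p}{R(A,B)}{1},
\]
which is the desired inequality.

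There is no significant obstacle; the proof is essentially bookkeeping. The only subtleties are tracking that $p \ge 0$ preserves the direction of Jensen when scaling the warm-up bound for $H(Y|X)$, and checking that the boundary case $p = 0$ collapses to \eqref{p=0} since $\deg_R(a)^0 = 1$. An alternative route would fuse everything into a single Jensen application on the joint distribution $P(a,b)$ with weight $\deg_R(a)^{p-1}$; this produces a one-line proof but only yields \eqref{p=p} cleanly for $p \ge 1$, as one still has to invoke $H(Y|X) \le \sum_a P(a) \ln \deg_R(a)$ to pin down the sign of the $(p-1)$ coefficient. The two-step version above is therefore preferable for its uniformity across all $p \ge 0$.
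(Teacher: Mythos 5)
Your proof is correct and follows essentially the same two-step Jensen argument as the paper: first bound $H(Y|X)$ by $\sum_a P(a)\ln\deg_R(a)$, scale by $p\ge 0$, add $H(X)$, and apply Jensen once more to cancel $P(a)$. The paper's version is identical in substance (its displayed combination step even contains a stray $\ln$ that your write-up correctly omits).
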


    \begin{proof}
        The proof in \cite{ANO24} is reproduced here for self-containment.
        To begin, apply Jensen's inequality to cancel $P(b|a)$:
        \begin{align*}
            H(Y|X) 
            & = \sum_{a\in A} P(a) \sum_{b:R(a,b)}
                P(b|a) \ln \frac{1}{P(b|a)}
            \\ & \le \sum_{a\in A} P(a) \ln \sum_{b:R(a,b)}
                P(b|a) \cdot \frac{1}{P(b|a)}
            \\ & = \sum_{a\in A} P(a) \ln \deg_R(a)
        \end{align*}

        Hence,
        \begin{equation}
            p H(Y|X) \le \sum_{a\in A} P(a) \ln \deg_R(a)^p.     \label{pHYX}
        \end{equation}
        We also have, by definition,
        \begin{equation}
            H(X) = \sum_{a\in A} P(a) \ln \frac{1}{P(a)}.          \label{HX}
        \end{equation}
        Now combine \eqref{pHYX} and \eqref{HX} and apply Jensen's inequality
        to cancel $P(a)$:
        \begin{align*}
            H(X) + p H(Y|X)
            & \le \sum_{a\in A} P(a) \ln \frac{\ln \deg_R(a)^p}{P(a)}
            \\ & \le \ln \sum_{a\in A} P(a)
                \cdot \frac{\ln \deg_R(a)^p}{P(a)}
            \\ & = \ln \sum_{a\in A} \deg_R(a)^p = \lnMom{p}{R(A, B)}{1}.
        \end{align*}
        This concludes the proof of \eqref{p=p}.
    \end{proof}

\def\eqs{\texorpdfstring{{\eqref{triple}--\eqref{335}}}{old examples}}
\subsection{How to derive \eqs?}

    In this subsection, we demonstrate that \eqref{p=p} bounds yield
    \eqref{triple}--\eqref{335} systematically.  We start from
    \eqref{triple}.  Note that \eqref{triple} by itself is trivial, but
    proving it using the general framework makes a great example: By
    subadditivity and \eqref{p=0} we have
    \[
        \ln \#_\Delta = H(X, Y, Z)
        \le H(X) + H(Y) + H(Z)
        \le \ln |A| + \ln |B| + \ln |C|.
    \]
    Now \eqref{triple} follows after we remove logarithm.

    Next we prove \eqref{AGM}, which implies \eqref{2edge} and \eqref{half}.
    Without loss of generality, assume $u \ge v \ge w$.  Use monotonicity,
    subadditivity, and Han's inequality to derive
    \begin{align*}
        0 & \le (u - v) H(X, Y),
        \\ (v - w) H(X, Y, Z) & \le (v - w) H(X, Y) + (v - w) H(Y, Z),
        \\ 2w H(X, Y, Z) & \le w H(X, Y) + w H(Y, Z) + w H(Z, X).
    \end{align*} 
    Sum these inequalities to get
    \[ (v + w) H(X, Y, Z) \le u H(X, Y) + v H(Y, Z) + w H(Z, X). \]
    Note that $v + w \ge 1$ by assumption, so the left-hand side is at least
    $\ln \#_\Delta$.  By \eqref{p=1}, the right-hand side is at most $u \ln
    |R(A, B)| + v \ln |S(B, C)| + w \ln |T(C, A)|$.  This proves \eqref{AGM},
    and hence \eqref{2edge} and \eqref{half}.

    We proceed to proving \eqref{PANDA}.  Use chain rule and submodularity to
    conclude
    \begin{align}
        \ln \#_\Delta = H(X, Y, Z)
        & \le H(X, Y) + H(Z|X, Y)                                     \notag
        \\ & \le H(X, Y) + H(Z|Y).                             \label{chain}
    \end{align}
    We already saw $H(X, Y) \le \ln |R(A, B)|$ in \eqref{p=1} as well as
    $H(Z|Y) \le \ln \max_{b\in B} \deg_S(b)$ in \eqref{p=oo}, so the
    right-hand side of \eqref{chain} is at most $\ln |R(A, B)| + \ln
    \max_{b\in B} \deg_S(b)$.  This proves the fourth of the six inequalities
    in \eqref{PANDA}.  The other five follow by symmetry.

    Moving on to \eqref{222}, we repeat \eqref{chain} three times:
    \begin{align*}
        H(X, Y, Z) & \le H(X, Y) + H(Z|Y) = H(X) + H(Y|X) + H(Z|Y),
        \\ H(X, Y, Z) & \le H(Y, Z) + H(X|Z) = H(Y) + H(Z|Y) + H(X|Z),
        \\ H(X, Y, Z) & \le H(Z, X) + H(Y|X) = H(Z) + H(X|Z) + H(Y|X).
    \end{align*}
    Sum them to get
    \[
        3 H(X{,}Y{,}Z)
        \le H(X) + 2 H(Y|X) + H(Y) + 2 H(Z|Y) + H(Z) + 2H(X|Z).
    \]
    Apply \eqref{p=p} with $p = 2$, $2$, and $2$ to turn the right-hand side
    into $\lnMom{2}{R}{1} + \lnMom{2}{S}{1} + \lnMom{2}{T}{1}$, verifying
    \eqref{222}.

    Finally, we are left with \eqref{335}.  Sum
    \begin{align*}
        H(X, Y, Z) & \le H(X) + H(Y|X) + H(Z),
        \\ 2 H(X, Y, Z) & \le 2 H(Y|X) + 2 H(Z, X),
        \\ 3 H(X, Y, Z) & \le 3 H(Y|Z) + 3 H(Z, X)
    \end{align*}
    to get
    \[ 6 H(X, Y, Z) \le H(X) + 3 H(Y|X) + H(Z) + 3 H(Y|Z) + 5 H(Z, X). \]
    Apply \eqref{p=p} with $p = 3$, $3$, and $1$ to bound the right-hand side
    by
    $\ln\Mom{3}{R}{1} + \ln\Mom{1}{S}{3} + 5 \ln\Mom{1}{T}{1}$,
    confirming \eqref{335}.

\section{Our New Ambidextrous bounds}                         \label{sec:new}

    In this section, we assume the same three-step framework as was described
    in Section~\ref{sec:3steps}.  We improve the third step by strengthening
    \eqref{p=p} to \eqref{p1q} in the first subsection.  We then derive the
    example bound \eqref{345} in the subsection after that.  As a bonus, we
    explain why we can view bounding $\#_\Delta$ as a fractional covering
    problem on the Venn diagram.  Finally, we discuss the advantages of the
    new bounds in the third subsection and a challenge they face in the
    fourth subsection.

\subsection{Proof of the ambidextrous bounds}

    \begin{theorem} [main]                                    \label{thm:p1q}
        Let $p, q \ge 1$.  Let $R(A, B)$ be a relation with column types $A$
        and $B$.  Let $(X, Y) \in R(A, B)$ be any random pair.  Then $p
        H(Y|X) + I(X; Y) + q H(X|Y) \le \lnMom{p}{R(A, B)}{q}$, i.e.,
        \eqref{p1q} holds.
    \end{theorem}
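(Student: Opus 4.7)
The plan is to mimic the proof of Proposition~\ref{pro:p11} but apply Jensen's inequality symmetrically so that the degrees of \emph{both} endpoints of an edge show up. The key algebraic identity is the standard decomposition $H(X,Y) = H(Y|X) + I(X;Y) + H(X|Y)$, which rewrites the left-hand side of \eqref{p1q} as
\[ p H(Y|X) + I(X;Y) + q H(X|Y) = (p-1) H(Y|X) + H(X,Y) + (q-1) H(X|Y). \]
The hypothesis $p, q \ge 1$ enters precisely here: it makes the coefficients $(p-1)$ and $(q-1)$ non-negative, so that Jensen-based upper bounds on $H(Y|X)$ and $H(X|Y)$ still point in the desired direction after scaling.

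Next, I would reuse the inner Jensen step from \eqref{pHYX} twice, once in each orientation, to get $H(Y|X) \le \sum_{a\in A} P(a) \ln \deg_R(a) = \sum_{(a,b)\in R} P(a,b) \ln \deg_R(a)$ (using $P(a) = \sum_{b : R(a,b)} P(a,b)$) and, by symmetry, $H(X|Y) \le \sum_{(a,b)\in R} P(a,b) \ln \deg_R(b)$. The joint entropy stays in its definitional form $H(X,Y) = \sum_{(a,b)\in R} P(a,b) \ln \frac{1}{P(a,b)}$. Combining these three bounds with weights $(p-1)$, $1$, and $(q-1)$ merges them into a single expectation:
\[ (p-1) H(Y|X) + H(X,Y) + (q-1) H(X|Y) \le \sum_{(a,b)\in R} P(a,b) \ln \frac{\deg_R(a)^{p-1} \deg_R(b)^{q-1}}{P(a,b)}. \]
A second, outer application of Jensen's inequality to cancel $P(a,b)$ (exactly as at the end of the proof of Proposition~\ref{pro:p11}) upper-bounds the right-hand side by $\ln \sum_{(a,b)\in R} \deg_R(a)^{p-1} \deg_R(b)^{q-1} = \lnMom{p}{R(A, B)}{q}$, finishing the argument.

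As a sanity check, setting $q = 1$ reduces $I(X;Y) + H(X|Y)$ to $H(X)$, so the left-hand side collapses to $H(X) + p H(Y|X)$ and we recover Proposition~\ref{pro:p11} verbatim; by the $p \leftrightarrow q$, $X \leftrightarrow Y$ symmetry of the statement, $p = 1$ recovers its mirror image. The main obstacle is really only the opening rewrite that splits the ambidextrous left-hand side into a non-negatively weighted combination of $H(Y|X)$, $H(X,Y)$, and $H(X|Y)$; once that reformulation is in hand, the proof is just the dexterous proof performed once per degree factor, followed by the usual outer Jensen, with the hypothesis $p, q \ge 1$ falling out as the exact range on which this strategy succeeds.
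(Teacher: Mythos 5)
Your proposal is correct and follows essentially the same route as the paper: the paper's equation \eqref{HIH} is exactly your rewrite $H(Y|X)+I(X;Y)+H(X|Y)=H(X,Y)=\sum_{(a,b)\in R}P(a,b)\ln\frac{1}{P(a,b)}$, and both arguments then combine it with the Jensen bounds $(p-1)H(Y|X)\le\sum P(a,b)\ln\deg_R(a)^{p-1}$ and $(q-1)H(X|Y)\le\sum P(a,b)\ln\deg_R(b)^{q-1}$ before a final outer Jensen cancels $P(a,b)$. The only cosmetic difference is that you name the decomposition up front while the paper states it as a definitional identity, so there is nothing to change.
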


    \begin{proof}
        First, apply Jensen's inequality to cancel $P(b|a)$:
        \begin{align*}
            H(Y|X)
            & = \sum_{(a,b)\in R} P(a, b) \ln \frac{1}{P(b|a)}
            \\ & = \sum_{a\in A} P(a) \sum_{b:R(a,b)}
                P(b|a) \ln \frac{1}{P(b|a)}
            \\ & \le \sum_{a\in A} P(a) \ln \sum_{b:R(a,b)}
                P(b|a) \cdot \frac{1}{P(b|a)}
            \\ & = \sum_{a\in A} P(a) \ln \deg_R(a).
        \end{align*}
        By symmetry, we have
        \begin{align}
            (p - 1) H(Y|X)
            & \le \sum_{(a,b)\in R} P(a, b) \ln \deg_R(a)^{p-1}, \label{p-1}
            \\ (q - 1) H(X|Y)
            & \le \sum_{(a,b)\in R} P(a, b) \ln \deg_R(b)^{q-1}. \label{q-1}
        \end{align}
        We also have, by definition,
        \begin{equation}
            H(Y|X) + I(X; Y) + H(X|Y)
            = \sum_{(a,b)\in R} P(a, b) \ln \frac{1}{P(a, b)}.    \label{HIH}
        \end{equation}
        Now combine \eqref{p-1}--\eqref{HIH} and apply Jensen's inequality to
        cancel $P(a, b)$:
        \begin{align*}
            \kern2em & \kern-2em
            p H(Y|X) + I(X; Y) + q H(X|Y)
            \\ & \le \sum_{(a,b)\in R} P(a, b) \ln
                \frac{\deg_R(a)^{p-1} \deg_R(b)^{q-1}} {P(a, b)}
            \\ & \le \ln \sum_{(a,b)\in R} P(a, b)
                \cdot \frac{\deg_R(a)^{p-1} \deg_R(b)^{q-1}} {P(a, b)}
            \\ & = \ln \sum_{(a,b)\in R} \deg_R(a)^{p-1} \deg_R(b)^{q-1}
            = \lnMom{p}{R(A, B)}{q}.
        \end{align*}
        This concludes the proof of \eqref{p1q}.
    \end{proof}

\def\eqs{\texorpdfstring{\eqref{345}}{new example}}
\subsection{How to derive \eqs\ and more bounds?}

    Our new ambidextrous bounds \eqref{p1q} lead to many new bounds like
    \eqref{345}.  In this subsection, we first demonstrate how to derive
    \eqref{345} and discuss how to generate more.

    The derivation starts with the six different ways to permute
    \eqref{chain}:
    \begin{align*}
        H(X, Y, Z) & \le H(X, Y) + H(Z|Y)
        \\ H(X, Y, Z) & \le H(Y, Z) + H(X|Z)
        \\ H(X, Y, Z) & \le H(Z, X) + H(Y|X)
        \\ 2 H(X, Y, Z) & \le 2 H(Y, X) + 2 H(Z|X)
        \\ 2 H(X, Y, Z) & \le 2 H(Z, Y) + 2 H(X|Y)
        \\ 2 H(X, Y, Z) & \le 2 H(X, Z) + 2 H(Y|Z)
    \end{align*}
    The right-hand sides of the inequalities above have the same sum as the
    left-hand sides of the inequalities below:
    \begin{align*}
        3 H(X, Y) + 2 H(X|Y) + H(Y|X) & = 4 H(Y|X) + 3 I(X{;}Y) + 5 H(X|Y)
        \\ 3 H(Y, Z) + 2 H(Y|Z) + H(Z|Y) & = 4 H(Z|Y) + 3 I(Y{;}Z) + 5 H(Y|Z)
        \\ 3 H(Z, X) + 2 H(Z|X) + H(X|Z) & = 4 H(X|Z) + 3 I(Z{;}X) + 5 H(Z|X)
    \end{align*}
    The right-hand sides of the inequalities above can be bounded by
    \eqref{p1q}, resulting in
    \[
        9 H(X, Y, Z)
        \le 3 \lnMom{4/3}{R}{5/3}
        + 3 \lnMom{4/3}{S}{5/3}
        + 3 \lnMom{4/3}{T}{5/3}.
    \]
    This concludes the proof of \eqref{345}.

    We now give an easy criterion that determines if a bound on $H(X, Y, Z)$
    is valid.  This can be used to generate more bounds like \eqref{345} with
    ease.

    \begin{lemma} [Venn criterion]                           \label{lem:venn}
        Consider a linear combination of these nine terms
        \begin{gather*}
            H(Y|X)\;,\; I(X; Y)\;,\; H(X|Y)\;,\; \\
            H(Z|Y)\;,\; I(Y; Z)\;,\; H(Y|Z)\;,\; \\
            H(X|Z)\;,\; I(Z; X)\;,\; H(Z|X)\;,\;
        \end{gather*}
        where \textbf{the coefficient of an ``$I$'' is no greater than the
        coefficients of the two ``$H$'' on the same line.} This linear
        combination upper-bounds $H(X, Y, Z)$ if and only if it forms a
        fractional covering of the Venn diagram in the sense of
        Figure~\ref{fig:venn}.  That is, it upper-bounds $H(X, Y, Z)$ iff
        changing basis to
        \begin{gather*}
            H(X | YZ)\;,\; H(Y | ZX)\;,\; H(Z | XY)\;,\; \\
            I(X; Y | Z)\;,\; I(Y; Z | X)\;,\; I(Z; X | Y)\;,\; \\
            I(X; Y; Z)\;,\;
        \end{gather*}
        results in coefficients greater than or equal to $1$.
    \end{lemma}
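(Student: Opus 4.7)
The plan is to prove the biconditional by separating necessity (``only if'') from sufficiency (``if''). For necessity, I would exhibit seven joint distributions on $(X, Y, Z)$, one per atom of the Venn diagram, under each of which $H(X, Y, Z)$ collapses to a single atom while the other six vanish: take two variables constant and the third uniform to isolate $H(X|YZ)$ (plus two cyclic analogues); take the conditioning variable constant and the other two identically uniform to isolate $I(X; Y | Z)$ (plus two cyclic analogues); and take $X = Y = Z$ uniform to isolate $I(X; Y; Z)$. In each case the nine-term combination collapses to (that atom's covering coefficient) $\cdot H(X)$, and the upper-bound hypothesis $\ge H(X, Y, Z) = H(X)$ forces the covering coefficient to be $\ge 1$.

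For sufficiency, I would first expand each of the nine building blocks in the seven-atom basis via identities such as $H(Y|X) = H(Y|ZX) + I(Y; Z | X)$, $I(X; Y) = I(X; Y | Z) + I(X; Y; Z)$, and $H(X|Y) = H(X|YZ) + I(X; Z | Y)$. Collecting coefficients yields
\[ c_{H(X|YZ)} = a_{X|Y} + a_{X|Z}, \; c_{I(X;Y|Z)} = a_{I(XY)} + a_{X|Z} + a_{Y|Z}, \; c_{I(X;Y;Z)} = a_{I(XY)} + a_{I(YZ)} + a_{I(ZX)}, \]
plus six symmetric analogues. Since $H(X, Y, Z) = \sum_i \mathrm{atom}_i$, the task reduces to showing $\sum_i (c_i - 1) \cdot \mathrm{atom}_i \ge 0$. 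Six of the seven atoms (the three $H(\cdot|\cdot\cdot)$'s and the three conditional mutual informations) are Shannon-nonnegative, so, given $c_i \ge 1$, their contributions are pointwise $\ge 0$.

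The main obstacle is the triple mutual information $I(X; Y; Z)$, which can be negative, so oversizing $c_{I(X;Y;Z)}$ could a priori ruin the bound. The rescue is that the constraint ``the coefficient of an $I$ is no greater than those of the two $H$'s on the same line'' forces
\[ c_{I(X;Y|Z)} - c_{I(X;Y;Z)} = (a_{X|Z} - a_{I(ZX)}) + (a_{Y|Z} - a_{I(YZ)}) \ge 0, \]
together with two cyclic versions. Combined with the Shannon identity $I(X; Y | Z) + I(X; Y; Z) = I(X; Y) \ge 0$, the joint surplus $(c_{I(X;Y|Z)} - 1) I(X; Y | Z) + (c_{I(X;Y;Z)} - 1) I(X; Y; Z)$ rearranges algebraically to $(c_{I(X;Y;Z)} - 1) I(X; Y) + (c_{I(X;Y|Z)} - c_{I(X;Y;Z)}) I(X; Y | Z)$, each of whose summands is a product of non-negative factors. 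Adding the remaining contributions from the three $H(\cdot|\cdot\cdot)$ atoms and the two unused conditional mutual informations $I(Y;Z|X)$ and $I(Z;X|Y)$, all of which are $\ge 0$ by the previous paragraph, finishes the proof.
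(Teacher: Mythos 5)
Your proof is correct and rests on the same two key ideas as the paper's: necessity via distributions that isolate each atom of the Venn diagram (the paper realizes all seven at once with independent Bernoullis, you realize them one at a time), and sufficiency via the observation that the boldfaced condition forces the coefficient of $I(X;Y;Z)$ to be dominated by that of $I(X;Y|Z)$, so the only possibly negative atom can be absorbed using $I(X;Y) = I(X;Y|Z) + I(X;Y;Z) \ge 0$. The only difference is presentational: the paper packages sufficiency as a two-case reduction (lowering the supplies of $I(X;Y)$, $I(Y;Z)$, $I(Z;X)$ until the $I(X;Y;Z)$ coefficient equals $1$), whereas you verify the surplus $\sum_i (c_i - 1)\,\mathrm{atom}_i \ge 0$ directly by an explicit rearrangement that, incidentally, needs only one of the three domination inequalities rather than all three.
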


    \begin{proof}
        [Proof of necessity (i.e., the only if part, $\Rightarrow$)]
        Denote by $\alpha, \beta, \gamma, \delta,
        \varepsilon, \zeta, \eta \in \{0, 1\}^7$ seven iid Bernoulli random
        variables.  Let $X = (\alpha, \delta, \zeta, \eta)$; let $Y = (\beta,
        \delta, \varepsilon, \eta)$; and let $Z = (\gamma, \varepsilon,
        \zeta, \eta)$.  Clearly $\alpha, \dotsc, \eta$ correspond to $H(X |
        YZ)$, $\dotsc$, $I(X; Y; Z)$, in that order.  If any coefficient is
        $< 1$, say that of $H(X | YZ)$, we will let $H(\alpha) = 1/2$ and
        $H(\beta) = H(\gamma) = H(\delta) = H(\varepsilon) = H(\zeta) =
        H(\eta) = 0$.  This makes $H(X, Y, Z) = H(\alpha) = 1/2$ while making
        the linear combination $< 1/2$.  The linear combination is thus not
        an upper bound.  The same trick applies to other six coefficients
        being $< 1$.  This finishes the proof of necessity.
    \end{proof}

    \begin{proof}
        [Proof of sufficiency (i.e., the if part, $\Leftarrow$)]
        We recycle Greek letters $\alpha, \dotsc, \eta$ and let them denote
        the coefficients in the Venn basis $H(X | YZ), \dotsc, I(X; Y; Z)$,
        in that order.  The proof is divided into two cases.  The first case
        is that $\eta$, the coefficient of $I(X; Y; Z)$, is exactly $1$.  In
        this case, since other $I$'s and $H$'s are nonnegative (cf.\
        lemma~\ref{lem:sub} and the paragraph thereafter), a linear
        combination with $\alpha, \dotsc, \zeta \ge 1$ is at least the linear
        combination with $\alpha, \dotsc, \zeta = 1$.  The latter coincides
        with $H(X, Y, Z)$ and finishes the case.

        The second case is $\eta > 1$.  This is where we use the
        \textbf{boldfaced} condition: For any $p, q \ge 1$, we can see that
        $p H(Y|X) + I(X; Y) + q H(X|Y)$ contribute more to $\delta,
        \varepsilon$ and $\zeta$ than to $\eta$.  The same applies to $p
        H(Z|Y) + I(Y; Z) + q H(Y|Z)$ and $p H(X|Z) + I(Z; X) + q H(Z|X)$.
        Hence, $\eta \le \delta, \varepsilon, \zeta$.  This means that we can
        decrease the linear combination by lowering the supply of $I(X; Y)$,
        $I(Y; Z)$, and $I(Z; X)$.  We can do so until $\eta = 1$, which
        reduces the problem to the first case.  This concludes the proof of
        sufficiency and hence the proof of Lemma~\ref{lem:venn}.
    \end{proof}

    Note that ambidextrous bounds always satisfy the \textbf{boldfaced}
    condition because Theorem~\ref{thm:p1q} requires $p, q \ge 1$.
    Therefore, Lemma~\ref{lem:venn} implies that a combination of
    ambidextrous bounds is a valid upper bound on $H(X, Y, Z)$ iff the
    coefficients are $\ge 1$ after changing to the Venn basis---that is, iff
    it forms a fractional covering of the Venn diagram.  As a corollary,
    Lemma~\ref{lem:venn} provides a universal proof to the Shannon-inequality
    part of \eqref{triple}--\eqref{335} and \eqref{345}.

\subsection{Advantage over the old bounds}

    As mentioned in the introduction, the bivariate moments of bi-degree
    sequences are a generalization of the $p$-norms of degree sequences.  As
    the former forms a strictly larger family of pessimistic estimates,
    chances are that some are strictly tighter than all old estimates.  That
    being said, this type of set-theoretically tautological improvements are
    frequently complexity trade-offs in disguise---should one be unable to
    afford to enumerate the larger family, the existence of tighter estimates
    therein helps very little.

    So, in the introduction, we argued against this by pointing out a pathway
    to enjoy new bounds for free.  The argument goes: If $R$ appears twice,
    they must come from two applications of \eqref{p=p} to $w H(X) + w p
    H(Y|X)$ and $(1 - w) H(Y) + (1 - w) q H(X|Y)$, which yield
    $\Mom{p}{R}{1}^w \cdot \Mom{1}{R}{q}^{1-w}$.  When that happens, we can
    apply \eqref{p1q} instead to yield $\Mom{w p+(1-w)}{R}{w+(1-w)q}$.  The
    latter ties or tightens the former due to \eqref{holder}.

    But how likely is it that $R$ appears twice?  The answer is that we can
    sometimes \emph{force} it.  Take $\#_\Delta$ as an example.  Suppose that
    $R = S = T$ are the same relation and that it is symmetric.  A generic
    variant of \eqref{triple}--\eqref{335} that look like
    \[
        \#_\Delta \le
        \Mom{p}{R}{1}^u + \Mom{q}{S}{1}^v + \Mom{r}{T}{1}^w
    \]
    can be mirrored to
    \[
        \#_\Delta \le
        \Mom{1}{R}{p}^u + \Mom{1}{S}{q}^v + \Mom{1}{T}{r}^w.
    \]
    Now that we see both $\Mom{p}{R}{1}$ and $\Mom{1}{R}{p}$, Hölder's
    inequality \eqref{holder} applies.  The same can be said to counting
    quadrilaterals, pentagons, hexagons, etc.

    We now prove \eqref{holder} by proving the general case \eqref{holder2}.

    \begin{theorem} [Hölder and convexity]                 \label{thm:holder}
        Let $p, q, r, s\ge 1$ and $0 < w < 1$.  Let $R(A, B)$ be a relation.
        Then $\Mom{p}{R}{q}^w \cdot \Mom{r}{R}{s}^{1-w} \ge \Mom{w
        p+(1-w)r}{R}{w q+(1-w)s}$, i.e., \eqref{holder} and \eqref{holder2}
        hold.  As a consequence, $\lnMom{p}{R}{q}$ is convex in $(p, q)$.
    \end{theorem}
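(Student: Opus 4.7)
The plan is to derive both \eqref{holder} and \eqref{holder2} from a single invocation of Hölder's inequality on the sum indexing the rows of $R$, then extract the convexity statement by taking logarithms. The key observation is that the summand of $\Mom{wp+(1-w)r}{R}{wq+(1-w)s}$ splits geometrically as a product of two factors, one of which will supply the $w$th power of a summand of $\Mom{p}{R}{q}$ and the other the $(1-w)$th power of a summand of $\Mom{r}{R}{s}$.

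Concretely, for each $(a,b)\in R$ I would set
\[
\alpha_{a,b} \coloneqq \deg_R(a)^{w(p-1)} \deg_R(b)^{w(q-1)},
\qquad
\beta_{a,b} \coloneqq \deg_R(a)^{(1-w)(r-1)} \deg_R(b)^{(1-w)(s-1)},
\]
so that $\alpha_{a,b}\beta_{a,b}$ is the summand of $\Mom{wp+(1-w)r}{R}{wq+(1-w)s}$, while $\alpha_{a,b}^{1/w}$ and $\beta_{a,b}^{1/(1-w)}$ are, respectively, the summands of $\Mom{p}{R}{q}$ and $\Mom{r}{R}{s}$. Applying Hölder's inequality with conjugate exponents $1/w$ and $1/(1-w)$ to $\sum_{(a,b)\in R} \alpha_{a,b}\beta_{a,b}$ immediately gives $\Mom{wp+(1-w)r}{R}{wq+(1-w)s} \le \Mom{p}{R}{q}^w \cdot \Mom{r}{R}{s}^{1-w}$, which is \eqref{holder2}; and \eqref{holder} is recovered as the special case $q = r = 1$. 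Taking logarithms of this inequality yields $w\lnMom{p}{R}{q} + (1-w)\lnMom{r}{R}{s} \ge \lnMom{wp+(1-w)r}{R}{wq+(1-w)s}$, which is precisely convexity of $(p, q) \mapsto \lnMom{p}{R}{q}$ on the parameter domain.

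Because Hölder does the heavy lifting, there is no real obstacle. The only thing needing a little care is the bookkeeping of exponents: the factor $w$ (respectively $1-w$) inside $\alpha_{a,b}$ (respectively $\beta_{a,b}$) must be placed exactly so that taking the $(1/w)$th (respectively $(1/(1-w))$th) power reproduces the summands of $\Mom{p}{R}{q}$ and $\Mom{r}{R}{s}$ verbatim. Incidentally, the hypothesis $p, q, r, s \ge 1$ plays no role in Hölder itself — any real exponents would work, since $\deg_R(a), \deg_R(b) \ge 1$ whenever $(a, b) \in R$ — and is retained only to stay within the range where Theorem~\ref{thm:p1q} applies.
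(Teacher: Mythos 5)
Your proposal is correct and is essentially the paper's own proof: both apply Hölder's inequality with conjugate exponents $1/w$ and $1/(1-w)$ to the sum over $(a,b)\in R$, splitting the summand of $\Mom{wp+(1-w)r}{R}{wq+(1-w)s}$ geometrically into $w$ and $1-w$ powers of the summands of $\Mom{p}{R}{q}$ and $\Mom{r}{R}{s}$, and then take logarithms to read off convexity. Your side remarks (that \eqref{holder} is the $q=r=1$ specialization and that the hypothesis $p,q,r,s\ge 1$ is not needed for the Hölder step itself) are also accurate.
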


    \begin{proof}
        \def\deg_R{d}
        To save space, $d$ means $\operatorname{deg}_R$ in this proof.  Apply
        Hölder's inequality to the left-hand side of the target inequality:
        \begin{align*}
            \Mom{p}{R}{q}^w \cdot \Mom{r}{R}{s}^{1-w}
            & = \Bigl( \sum_{(a,b)\in R}
                \deg_R(a)^{p-1} \deg_R(b)^{q-1} \Bigr)^w
                \Bigl( \sum_{(a,b)\in R}
                \deg_R(a)^{r-1} \deg_R(b)^{s-1} \Bigr)^{1-w}
            \\ & \ge \sum_{(a,b)\in R}
                \bigl( \deg_R(a)^{p-1} \deg_R(b)^{q-1} \bigr)^w
                \bigl( \deg_R(a)^{r-1} \deg_R(b)^{s-1} \bigr)^{1-w}
            \\ & = \sum_{(a,b)\in R}
                \deg_R(a)^{w p+(1-w)r-1} \deg_R(b)^{wq+(1-w)s-1}
            \\ & = \Mom{wp+(1-w)r}{R}{wq+(1-w)s}
        \end{align*}
        This proves \eqref{holder} and \eqref{holder2}.  Now take $\log$ to
        arrive at $w \lnMom{p}{R}{q} + (1 - w) \lnMom{r}{R}{s} \ge
        \lnMom{wp+(1-w)r}{R}{wq+(1-w)s}$, which is one way to define/verify
        convexity.
    \end{proof}

    The convexity implies that the best dexterous bound cannot have both
    $\Mom{p_1}{R}{1}$ and $\Mom{p_2}{R}{1}$ for some $p_1 < p_2$, for
    otherwise we can always replace them with $\Mom{p}{R}{1}$ for some $p
    \in [p_1, p_2]$ and be better off.  However, $R$ can still appear twice
    in the form of $\Mom{p}{R}{1}$ and $\Mom{1}{R}{q}$ for some $p, q \ge
    1$, which can only be improved by inventing ambidextrous bounds.

    Now that ambidextrous bounds are invented, the convexity then implies
    that the best bound is always a single moment term.  Not only is it
    tighter than all linear combinations of $\lnMom{p_i}{R}{1}$ and
    $\lnMom{1}{R}{q_i}$, it is also tighter than all linear combinations of
    the general building block $\lnMom{p_i}{R}{q_i}$.  The preceding
    argument is made precise in the following corollary.

    \begin{corollary} [simple is optimal]                  \label{cor:single}
        Consider the task of bounding $p H(Y|X) + I(X; Y) + q H(X|Y)$, where
        $p, q \ge 1$.        Among all possible ways to apply
        \eqref{p1q}---including the special cases
        \eqref{p=0}--\eqref{p=p}---to obtain a linear combination of
        $\lnMom{p_i}{R}{q_i}$, the tightest one must be a single term $w
        \lnMom{p/w}{R}{q/w}$ for some $w \in [1, \min(p, q)]$.
    \end{corollary}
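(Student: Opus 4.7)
The plan is to reduce any valid multi-term bound to a single-term bound via Jensen's inequality (enabled by the convexity in Theorem~\ref{thm:holder}), and then tighten the moment parameters using the monotonicity of $\lnMom{p}{R}{q}$ in $(p, q)$. The first step is parameterization: any candidate bound obtained from \eqref{p1q}, including its special cases \eqref{p=0}--\eqref{p=p}, takes the form $\sum_i w_i \lnMom{p_i}{R}{q_i}$. Summing the scaled inequalities $w_i[p_i H(Y|X) + I(X;Y) + q_i H(X|Y)] \le w_i \lnMom{p_i}{R}{q_i}$ and comparing coefficients against the target $p H(Y|X) + I(X;Y) + q H(X|Y)$ yields the necessary constraints $\sum_i w_i p_i \ge p$, $\sum_i w_i \ge 1$, and $\sum_i w_i q_i \ge q$, because the three entropy terms can be varied independently over $[0, \infty)$.

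Next, let $W := \sum_i w_i$, $\bar p := W^{-1}\sum_i w_i p_i$, and $\bar q := W^{-1}\sum_i w_i q_i$; then $W \ge 1$, $W\bar p \ge p$, and $W\bar q \ge q$. Because $(p, q) \mapsto \lnMom{p}{R}{q}$ is convex by Theorem~\ref{thm:holder}, Jensen's inequality gives
\[
    \sum_i w_i \lnMom{p_i}{R}{q_i} \ge W \lnMom{\bar p}{R}{\bar q},
\]
so the multi-term bound is dominated by the single-term bound $W \lnMom{\bar p}{R}{\bar q}$. Moreover, since $\deg_R(a), \deg_R(b) \ge 1$ for every $(a, b) \in R$, each summand $\deg_R(a)^{p-1}\deg_R(b)^{q-1}$ is non-decreasing in $(p, q)$, and hence so is $\lnMom{p}{R}{q}$; this lets me replace $\bar p$ by $\max(1, p/W)$ and $\bar q$ by $\max(1, q/W)$ without weakening the inequality. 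When $W$ lies in $[1, \min(p, q)]$, both maxima simplify to $p/W$ and $q/W$, giving exactly the promised form $w\lnMom{p/w}{R}{q/w}$ with $w := W$.

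The main obstacle is ruling out the boundary case $W > \min(p, q)$, where $(p/W, q/W)$ exits the admissible range and the compressed bound is instead $W\lnMom{1}{R}{q/W}$ (assuming $p \le q$ without loss of generality). To handle this I would show that $t \mapsto \lnMom{p'}{R}{t}/t$ is non-increasing on $t \ge 1$ for any fixed $p' \ge 1$; the key identity is
\[
    \frac{d}{dt}\Bigl[\frac{\lnMom{p'}{R}{t}}{t}\Bigr] = -\frac{H(\pi_t)}{t^2},
\]
where $\pi_t$ is the distribution on $R$ with mass proportional to $\deg_R(a)^{p'-1}\deg_R(b)^{t-1}$ and $H(\pi_t) \ge 0$ is its Shannon entropy. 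Substituting $p' = 1$ and $t = q/W$ shows $W\lnMom{1}{R}{q/W} \ge p\lnMom{1}{R}{q/p}$ for every $W \in [p, q]$, and the sub-subcase $W > q$ follows by applying the same monotonicity once more. Thus every optimum single-term bound is realized at some $w \in [1, \min(p, q)]$, completing the proof.
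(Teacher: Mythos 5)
Your core argument is the same as the paper's: parameterize any multi-term bound as $\sum_i w_i \lnMom{p_i}{R}{q_i}$, collapse it to a single term $W\lnMom{\bar p}{R}{\bar q}$ via the convexity/H\"older statement of Theorem~\ref{thm:holder}, and then push $(\bar p,\bar q)$ down to $(p/W, q/W)$ using the fact that $\lnMom{p}{R}{q}$ is non-decreasing in each argument (valid since every degree is at least $1$). Up to that point the two proofs coincide, and the paper stops there. You go further and explicitly treat the boundary case $W > \min(p,q)$, where $p/W$ or $q/W$ falls below $1$ and the compressed term is no longer of the advertised form; the paper's proof is silent on this, so your instinct to close it is a genuine improvement. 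However, the identity you invoke there is wrong as stated: for $p'=1$ one has $\lnMom{1}{R}{t}/t = \ln\bigl(\sum_b \deg_R(b)^t\bigr)^{1/t}$, and its $t$-derivative equals $-H(\rho_t)/t^2$ where $\rho_t$ is the distribution \emph{on $B$} with mass proportional to $\deg_R(b)^t$ (the marginal of your $\pi_t$), not $-H(\pi_t)/t^2$ with $\pi_t$ the distribution on the pairs of $R$; a direct check on the Z-shaped relation at $t=1$ gives derivative $-(\ln 3 - \tfrac23\ln 2)$ while $H(\pi_1)=\ln 3$. The monotonicity you need is nonetheless true --- for $p'=1$ it is just the monotonicity of $\ell_t$-norms in $t$ --- so the repair is purely local and your overall proof stands once the identity is corrected (or replaced by the norm-monotonicity citation).
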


    \begin{proof}
        The very first step is to verify that $w \lnMom{p/w}{R}{q/w}$ is a
        valid upper bound for any $w \in [1, \min(p, q)]$.  This is simply
        because $p H(Y|X) + I(X; Y) + q H(X|Y) \le p H(Y|X) + w I(X; Y) + q
        H(X|Y) \le w \lnMom{p/w}{R}{q/w}$.  That is, we can always add more
        $I(X; Y)$ before applying \eqref{p1q}.

        Next, let us consider a bound
        \[
            p H(Y|X) + I(X; Y) + q H(X|Y)
            \le \sum_i w_i \lnMom{p_i}{R}{q_i}
        \]
        obtained  by applying \eqref{p1q} like this:
        \begin{gather*}
            w_1 p_1 H(Y|X) + w_1 I(X; Y) + w_1 q_1 H(X|Y)
            \le w_1 \lnMom{p_1}{R}{q_1} \\
            w_2 p_2 H(Y|X) + w_2 I(X; Y) + w_2 q_2 H(X|Y)
            \le w_2 \lnMom{p_2}{R}{q_2} \\
            w_3 p_3 H(Y|X) + w_3 I(X; Y) + w_3 q_3 H(X|Y)
            \le w_3 \lnMom{p_3}{R}{q_3} \\
            \vdots \kern5em \vdots \kern5em \vdots \kern6em \vdots
        \end{gather*}
        Fix $w \coloneqq \sum_i w_i$ to be the total weight.  We next want to
        show that the multi-term bound $\sum_i w_i \lnMom{p_i}{R}{q_i}$ is
        $\ge$ the single-term bound $w \lnMom{p/w}{R}{q/w}$.

        To show that, apply Theorem~\ref{thm:holder} to obtain $\sum_i w_i
        \lnMom{p_i}{R}{q_i} \ge w \lnMom{r/w}{R}{s/w}$, where $r \coloneqq
        \sum_i w_i p_i$ and $s \coloneqq \sum_i w_i q_i$ are weighted sums.
        It remains to show $\Mom{r/w}{R}{s/w} \ge \Mom{p/w}{R}{q/w}$.

        Note that $\Mom{p}{R}{q}$ is monotonic in $p$ and in $q$.  So the
        proof boils down to showing $r \ge p$ and $s \ge q$.  This is already
        the case because we used $\sum_i w_i p_i H(Y|X)$ to upper bound $p
        H(Y|X)$ and $\sum_i w_i q_i H(X|Y)$ to upper bound $q H(X|Y)$.  This
        concludes the proof of optimality.
    \end{proof}

    Corollary~\ref{cor:single} implies that, after we find a good bound using
    an approximate minimization algorithm, it (should) only contains one
    moment term per input relation.  Therefore, if we ever want to
    double-check this bound, generate a certificate, or refine it even
    further, it would be very easy to do so.

\begin{figure}
    \centering
    \begin{tikzpicture}
        \begin{axis} [
            view={-30}{25}, clip=false,
            xlabel={$p = 1 - q$}, ylabel={$w$},
            x dir=reverse, y dir=reverse,
            zlabel={$w\lnMom{p/w}{\ZZ}{q/w}$},
        ]
            \def\f(#1,#2){#2*ln(2^(#1/#2-1)+2^((1-#1)/#2-1)+2^(1/#2-2))}
            \def\fb(#1,#2){ln(2)}
            \def\g(#1){#1}
            \def\preparetriangles{
                \pgfkeys{/pgf/fpu,/pgf/fpu/output format=sci}
                \def\trianglebuffer{}
                \foreach\x in {1,...,16}{
                    \foreach\y in{1,...,\x}{
                        \pgfmathsetmacro\p{(\x+\y-1)/32}
                        \pgfmathsetmacro\w{(\x-\y)/32}
                        \computexyz&meta(\xa,\ya,\za)[\wa]=\f(\p,\w+1/32)
                        \computexyz&meta(\xb,\yb,\zb)[\wb]=\f(\p-1/32,\w)
                        \computexyz&meta(\xc,\yc,\zc)[\wc]=\f(\p+1/32,\w)
                        \pgfmathsetmacro\zb{\w==0?ln(2):\zb}
                        \pgfmathsetmacro\zc{\w==0?ln(2):\zc}
                        \pgfmathsetmacro\wb{\w==0?ln(2):\wb}
                        \pgfmathsetmacro\wc{\w==0?ln(2):\wc}
                        \pushtriangle{
                            (\xa,\ya,\za)[\wa]
                            (\xb,\yb,\zb)[\wb]
                            (\xc,\yc,\zc)[\wc]
                        }
                    }
                }
                \pgfkeys{/pgf/fpu=false}
            }
            \draw [line width=0.2, gray] (rel axis cs: 0, 1, 0) --
                (rel axis cs: 0.5, 0, 0) -- (rel axis cs: 1, 1, 0);
            \addplottriangles
            \tikzset{
                1/.style={draw=yellow!40!black, dotted},
                2/.style={yellow!80!black, node contents=$\bullet$},
                3/.style={yellow!60!black, align=center}
            }
            \draw [1] (axis description cs: 0.8, 0.75)
                node (X) [3] {ineq.\ \eqref{414}${}\div 8$}
                (X) -- (0.5, 1/8, 0.5477533293) node [2];
            \draw [1] (axis description cs: 0.85, 0.6)
                node (X) [3] {ineq.\ \eqref{424}${}\div 8$}
                (X) -- (0.5, 2/8, 0.5198603854) node [2];
            \draw [1] (axis description cs: 0.85, 0.4)
                node (X) [3] {ineq.\ \eqref{444}${}\div 8$}
                (X) -- (0.5, 4/8, 0.5493061443) node [2];
        \end{axis}
    \end{tikzpicture}
    \Description{
        A plot of $w \lnMom{p/w}{\ZZ}{q/w}$
    }
    \caption{
        A plot of $w \lnMom{p/w}{\ZZ}{q/w}$ after normalizing $p + q$ to
        $1$, i.e., the ``per conditional entropy'' bound.  The range of $w$
        is $[0, \min(p, q)]$.  Observe that the plot is not monotonic in $w$,
        so we might want to add more $I(X; Y)$ before \eqref{p1q}.
    }                                                      \label{fig:convex}
\end{figure}

\subsection{An unwanted degree of freedom}

    Corollary~\ref{cor:single} does not specify which $w$ minimizes $w
    \lnMom{p/w}{R}{q/w}$.  It is usually $w = 1$ but, interestingly, not
    always.  When $p$ and $q$ are too large (say $\ge 3$), ambidextrous
    bounds tend to overshoot too much to the point that adding $I(X; Y)$ to
    bring down $p$ and $q$ is beneficial.  A concrete example is below.

    \begin{example}
        Recall the Z-shaped relation $\ZZ = \{ (1, 2), (3, 2), (3, 4) \}$.
        Consider $(X, Y)$ being a random row of $\ZZ$.  Then \eqref{p1q} with
        $(p, q) = (3, 3)$ yields
        \begin{equation}
            4 H(X|Y) + I(X; Y) + 4 H(Y|X)
            \le \lnMom{4}{\ZZ}{4} = \ln 80.                      \label{414}
        \end{equation}
        At the same time, \eqref{p1q} with $(p, q) = (2, 2)$ yields
        (cf.\ Figure~\ref{fig:212})
        \begin{equation}
            4 H(X|Y) + 2 I(X; Y) + 4H(Y|X)
            \le 2 \lnMom{2}{\ZZ}{2}= \ln 64.                     \label{424}
        \end{equation}
        Moreover, \eqref{p1q} with $(p, q) = (1, 1)$ yields
        \begin{equation}
            4 H(X|Y) + 4 I(X; Y) + 4H(Y|X)
            \le 4 \lnMom{1}{\ZZ}{1}= \ln 81.                     \label{444}
        \end{equation}
        To summarize, the relaxation with twice as much $I(X; Y)$ benefits
        the bound.  But too much $I(X; Y)$ is still harmful.  See
        Figure~\ref{fig:convex} for the relative positions of
        \eqref{414}--\eqref{444}.
    \end{example}

    Note that the non-monotonicity in $w$ is a new feature of our
    ambidextrous bounds.  The old dexterous bounds do not need to be open to 
    the possibility of adding $I(X; Y)$ for improvement because $w \in [1,
    \min(p, q)] = [1, 1]$.  Very interesting implications follow:
    \begin{itemize}
        \item Even if we always stick to $w = 1$, ambidextrous bounds are
            still equivalent to or better than dexterous bounds.  (This is
            due to \eqref{holder}.)
        \item Anytime the minimizer $w$ is not $1$ but strictly greater, the
            ambidextrous bound at that $w$ is strictly better than the
            ambidextrous bound at $w = 1$.
        \item Therefore, any nontrivial minimizer $w$ gives rise to a
            strictly better bound over dexterous ones.
    \end{itemize}
    That is to say, as much as finding the optimal $w$ is challenging, it
    being nontrivial to find is a source of improvement over the old
    dexterous framework.

    To conclude this subsection, we offer a relief that $w
    \lnMom{p/w}{R}{q/w}$ is convex in $w$, so finding the best $w$ is less
    demanding than it sounds.

    \begin{proposition}                                    \label{pro:convex}
        $w \lnMom{p/w}{R}{q/w}$ is convex in $w \in [1, \min(p, q)]$.
    \end{proposition}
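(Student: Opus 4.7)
The plan is to deduce the proposition from the joint convexity of $\lnMom{p}{R}{q}$ in $(p,q)$, already established in Theorem~\ref{thm:holder}, by recognizing $w\lnMom{p/w}{R}{q/w}$ as a \emph{perspective transform}. For any convex function $g$, the perspective $\tilde g(u,v,w) \coloneqq w\, g(u/w, v/w)$ is jointly convex on $\{w>0\}$; taking $g(p,q) = \lnMom{p}{R}{q}$ makes $(p,q,w) \mapsto w\lnMom{p/w}{R}{q/w}$ jointly convex in all three variables, so restricting to the slice where $p$ and $q$ are held fixed gives convexity in $w$ alone.

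For self-containment I would write out the short direct argument rather than citing the perspective lemma by name. Fix $w_1, w_2 \in [1, \min(p,q)]$ and $\lambda \in [0,1]$, set $\bar w \coloneqq \lambda w_1 + (1-\lambda) w_2$, and introduce $\mu \coloneqq \lambda w_1/\bar w$, so that $1-\mu = (1-\lambda)w_2/\bar w$ automatically. The key identity is $\mu(p/w_1) + (1-\mu)(p/w_2) = p/\bar w$, and likewise with $q$. Applying the joint convexity of $\lnMom{\cdot}{R}{\cdot}$ (Theorem~\ref{thm:holder}) to the points $(p/w_1, q/w_1)$ and $(p/w_2, q/w_2)$ with weights $(\mu, 1-\mu)$, then multiplying through by $\bar w$ and using $\bar w \mu = \lambda w_1$ and $\bar w(1-\mu) = (1-\lambda)w_2$, yields exactly
\[
    \bar w \lnMom{p/\bar w}{R}{q/\bar w}
    \le \lambda w_1 \lnMom{p/w_1}{R}{q/w_1}
    + (1-\lambda) w_2 \lnMom{p/w_2}{R}{q/w_2},
\]
which is the convexity of $w \mapsto w\lnMom{p/w}{R}{q/w}$.

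The only side check is that every evaluation of $\lnMom{\cdot}{R}{\cdot}$ stays inside the domain of Theorem~\ref{thm:holder}, i.e., has both arguments $\ge 1$. Since $w_1, w_2 \in [1, \min(p,q)]$ forces $p/w_i \ge 1$ and $q/w_i \ge 1$, and since $\bar w$ is a convex combination of $w_1, w_2$ and therefore also lies in $[1,\min(p,q)]$, every ratio appearing in the argument is $\ge 1$. I do not anticipate any real obstacle: the proposition is effectively a textbook corollary of the joint convexity already in hand, and the only substantive content is the perspective identity on the weights.
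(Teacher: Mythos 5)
Your proof is correct, and it takes a genuinely different route from the paper's. The paper treats $F(p, q, w) \coloneqq w \lnMom{p/w}{R}{q/w}$ as a smooth trivariate function, combines the convexity in $(p, q)$ from Theorem~\ref{thm:holder} with the degree-$1$ homogeneity $F(\lambda p, \lambda q, \lambda w) = \lambda F(p, q, w)$, and reasons about the Hessian: a zero eigenvalue along the ray $(p, q, w)$ and positive eigenvalues transverse to it, hence positive semi-definiteness, hence (strict) convexity along the direction $(0, 0, 1)$. You instead prove the same perspective-function fact by the standard algebraic reweighting: with $\bar w = \lambda w_1 + (1-\lambda) w_2$ and $\mu = \lambda w_1 / \bar w$ the identity $\mu (p/w_1) + (1-\mu)(p/w_2) = p/\bar w$ turns one application of Theorem~\ref{thm:holder} into
\[
    \bar w \lnMom{p/\bar w}{R}{q/\bar w}
    \le \lambda w_1 \lnMom{p/w_1}{R}{q/w_1}
    + (1-\lambda) w_2 \lnMom{p/w_2}{R}{q/w_2},
\]
and your domain check ($p/w_i, q/w_i \ge 1$ and $\bar w \in [1, \min(p,q)]$) is exactly the point that needs verifying. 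What each approach buys: yours is more elementary, requires no smoothness or spectral reasoning, and every step is an explicit inequality already certified by Theorem~\ref{thm:holder}, so it nails the proposition as stated; the paper's differential argument additionally asserts \emph{strict} convexity in $w$ (useful for the later ``stop once the slope turns positive'' heuristic), though its claim that ``the other two eigenvalues are positive'' is left unproved and is really the same perspective fact in Hessian clothing. If anything, your identity is the cleaner certificate for the non-strict statement actually being claimed.
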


    \begin{proof}
        For a fixed relation $R$, $F(p, q, w) \coloneqq w
        \lnMom{p/w}{R}{q/w}$ is a trivariate smooth function.  $F$ is convex
        in $(p, q)$ by Theorem~\ref{thm:holder}.  $F$ is also homogeneous of
        degree $1$ in the sense that, for any $\lambda > 0$, $F(\lambda p,
        \lambda q, \lambda w) = \lambda F(p, q, w)$.  These characterize the
        Hessian matrix $\nabla^2 F$ at $(p, q, w)$: It has an eigenvalue $0$
        paired with eigenvector $(p, q, w)$; and the other two eigenvalues
        are positive.  As a result, $\nabla^2 F$ is positive semi-definite.
        Now vary $w$.  As we are heading in $(0, 0, 1)$, not in the same
        direction as the zero eigenvalue, we will always experience a
        strictly positive second derivative.  Thus, $F$ is (strictly) convex
        when $w$ alone varies.
    \end{proof}

    See Figure~\ref{fig:convex} for a slice of $w \lnMom{p/w}{R}{q/w}$ at $p
    + q = 1$, i.e., the bound when we normalize the total amount of $H(Y|X)$
    and $H(X|Y)$ to $1$, hence the name ``per conditional entropy bound''.
    The plot shows that increasing $w$ can sometimes lead to tighter bounds.
    But since the function is convex in $w$, we can stop searching after we
    see the slope $(\partial/\partial w) w \lnMom{p/w}{R}{q/w}$ turns
    positive.
\begin{tikzpicture} [overlay]
    \path foreach \i in {0, ..., 2} {(\i*120 + 90: 2pt) coordinate (c\i)};
    \path foreach \j in {0, ..., 3} {(22.5 + \j*90: 2pt) coordinate (d\j)};
    \path foreach \k in {0, ..., 4} {(\k*72: 2pt) coordinate (\k)};
\end{tikzpicture}
\def\setpic#1#2{\pgfdeclareplotmark{#2}{\pgftext{\copy#1}}}
\def\deficon#1#2#3#4{
    \newbox\iconbox\setbox\iconbox=
    \hbox{\tikz[every path/.style={line width=0.2pt, red!#20!blue}]{#4}}
    \expandafter\setpic\expandafter{\the\iconbox}{#3}
}
\deficon{3}{2}{path3}{\draw(c2)--(c0)--(c1);}
\deficon{3}{3}{K3}{\draw(c2)--(c0)--(c1)--cycle;}
\deficon{4}{3}{claw}{\draw(d0)--(d1) (d0)--(d2) (d0)--(d3);}
\deficon{4}{3}{path4}{\draw(d0)--(d1)--(d2)--(d3);}
\deficon{4}{4}{pan3}{\draw(d3)--(d1)--(d2)--(d3)--(d0);}
\deficon{4}{4}{cycle4}{\draw(d0)--(d1)--(d2)--(d3)--(d0);}
\deficon{4}{5}{fan2}{\draw(d0)--(d1)--(d2)--(d3)--(d0) (d0)--(d2);}
\deficon{4}{6}{K4}{\draw(d0)--(d1)--(d2)--(d3)--(d0)--(d2) (d1)--(d3);}
\deficon{5}{4}{K14}{\draw(0)--(1) (0)--(2) (0)--(3) (0)--(4);}
\deficon{5}{4}{chair}{\draw(1)--(2)--(3)--(4) (0)--(2);}
\deficon{5}{4}{path5}{\draw(3)--(4)--(0)--(1)--(2);}
\deficon{5}{5}{cricket}{\draw(2)--(3)--(0)--(2) (1)--(0)--(4);}
\deficon{5}{5}{pan4}{\draw(3)--(4)--(0)--(1)--(2) (1)--(3);}
\deficon{5}{5}{bull}{\draw(3)--(4)--(0)--(1)--(2) (1)--(4);}
\deficon{5}{5}{pan4c}{\draw(3)--(4)--(0)--(1)--(2) (0)--(3);}
\deficon{5}{5}{cycle5}{\draw(0)--(1)--(2)--(3)--(4)--(0);}
\deficon{5}{6}{dart}{\draw(3)--(4)--(0)--(1)--(2) (1)--(4)--(2);}
\deficon{5}{6}{K23}{\draw(3)--(4)--(0)--(1)--(2) (1)--(3) (2)--(4);}
\deficon{5}{6}{butterfly}{\draw(3)--(4)--(0)--(1)--(2) (2)--(0)--(3);}
\deficon{5}{6}{house}{\draw(0)--(1)--(2)--(3)--(4)--(0) (1)--(4);}
\deficon{5}{6}{kite}{\draw(3)--(4)--(0)--(1)--(2) (0)--(2)--(4);}
\deficon{5}{7}{K3u2K1c}{\draw(3)--(4)--(0)--(1)--(2)--(4)--(3)--(1);}
\deficon{5}{6}{fan3}{\draw(0)--(1)--(2)--(3)--(4)--(0) (2)--(0)--(3);}
\deficon{5}{7}{clawuK1c}{\draw(3)--(4)--(0)--(1)--(2) (0)--(3)--(1)--(4);}
\deficon{5}{7}{P2uP3c}{\draw(0)--(1)--(2)--(3)--(4)--(0) (1)--(3) (2)--(4);}
\deficon{5}{8}{P3u2K1c}{\draw(0)--(1)--(2)--(3)--(4)--(0)(2)--(4)--(1)--(3);}
\deficon{5}{8}{wheel4}{\draw(2)--(0)--(1)--(2)--(3)--(4)--(0)--(2)(1)--(4);}
\deficon{5}{9}{K5_e}{\draw(1)--(2)--(3)--(4)--(0)--(1)--(3)--(0)--(2)--(4);}
\deficon{5}{10}{K5}{\draw(1)--(2)--(3)--(4)--(0)--(1)--(3)--(0)--(2)--(4)--(1);}

\section{Simulations}                                        \label{sec:simu}

    In this section, we present simulation results.

    For real-world relations, we use undirected graphs from the Stanford
    Large Network Dataset Collection (SNAP), with a few exceptions:
    \begin{itemize}
        \item Datasets \texttt{as-733}, \texttt{Oregon-1}, and
            \texttt{Oregon-2} are parameterized by dates, so only the latest
            ones are used.
        \item Large files \texttt{com-Friendster}, \texttt{com-Orkut},
            \texttt{com-LiveJournal}, \texttt{as-Skitter}, and
            \texttt{twitch-gamers} are excluded.
        \item JSON files \verb|deezer_edges|, \verb|twitch_edges|,
            \verb|git_edges|, and \verb|reddit_edges| are skipped for
            crashing our text editor too many times.
    \end{itemize}

    For queries, we consider all simple connected graphs with three, four and
    five vertices.  Two have three vertices:
    \tikz[scale=2]{\pgfuseplotmark{path3}}
    \tikz[scale=2]{\pgfuseplotmark{K3}}.
    Six have four vertices:
    \tikz[scale=2]{\pgfuseplotmark{claw}}
    \tikz[scale=2]{\pgfuseplotmark{path4}}
    \tikz[scale=2]{\pgfuseplotmark{pan3}}
    \tikz[scale=2]{\pgfuseplotmark{cycle4}}
    \tikz[scale=2]{\pgfuseplotmark{fan2}}
    \tikz[scale=2]{\pgfuseplotmark{K4}}.
    Twenty-one have five vertices:
    \tikz[scale=2]{\pgfuseplotmark{K14}}
    \tikz[scale=2]{\pgfuseplotmark{chair}}
    \tikz[scale=2]{\pgfuseplotmark{path5}}
    \tikz[scale=2]{\pgfuseplotmark{cricket}}
    \tikz[scale=2]{\pgfuseplotmark{pan4}}
    \tikz[scale=2]{\pgfuseplotmark{bull}}
    \tikz[scale=2]{\pgfuseplotmark{pan4c}}
    \tikz[scale=2]{\pgfuseplotmark{cycle5}}
    \tikz[scale=2]{\pgfuseplotmark{dart}}
    \tikz[scale=2]{\pgfuseplotmark{K23}}
    \tikz[scale=2]{\pgfuseplotmark{butterfly}}
    \tikz[scale=2]{\pgfuseplotmark{house}}
    \tikz[scale=2]{\pgfuseplotmark{kite}}
    \tikz[scale=2]{\pgfuseplotmark{K3u2K1c}}
    \tikz[scale=2]{\pgfuseplotmark{fan3}}
    \tikz[scale=2]{\pgfuseplotmark{clawuK1c}}
    \tikz[scale=2]{\pgfuseplotmark{P2uP3c}}
    \tikz[scale=2]{\pgfuseplotmark{P3u2K1c}}
    \tikz[scale=2]{\pgfuseplotmark{wheel4}}
    \tikz[scale=2]{\pgfuseplotmark{K5_e}}
    \tikz[scale=2]{\pgfuseplotmark{K5}}.
    Their colors depend monotonically on the numbers of edges.  See also
    \cite[A001349]{OEIS} and references therein.

    Now the task boils down to counting graph homomorphisms from one query
    graph to one SNAP dataset, as it is the most natural object degree
    sequences can bound.  Therefore, for the number of quadrilaterals
    \tikz[scale=2]{\pgfuseplotmark{cycle4}}, $(a, b, a, b)$ counts if $R(a,
    b)$ and $(a, a, a, a)$ counts if $R(a, a)$.  This is different from
    \cite{SuL20}'s counting of subgraphs, which are injective homomorphisms.
 
    Results are presented in Figures~\ref{fig:plots1}--\ref{fig:plots3}, with
    the name of each dataset in the lower right corner of each plot.  The
    horizontal axis is the actual count of homomorphisms; the vertical axis
    is for upper bounds.  In each plot, each of the icons
    \tikz[scale=2]{\pgfuseplotmark{path3}},
    \tikz[scale=2]{\pgfuseplotmark{K3}}, \ldots,
    \tikz[scale=2]{\pgfuseplotmark{K5}} appears twice.
    The one above is dexterous and the one below ambidextrous (recall that
    the latter is provably tighter).  The area below the diagonal (i.e., $y <
    x$) is shaded in yellow; it helps visualize how far off the bounds are.
    See also Figure~\ref{fig:average} for the geometric mean over all
    datasets.
    
    A clear trend is that dexterous bounds are almost a function of the
    numbers of vertices (the corresponding icons line up horizontally).  But
    query graphs with more edges have fewer homomorphisms into a dataset so
    dexterous bounds are less tight for denser graphs.  On the other hand,
    ambidextrous bounds are more sensitive to the numbers of edges---the
    corresponding icons still line up, but not as much---and yield tighter
    bounds.

    Another way to compare is to look at $\log_{10}$ of relative errors in
    Figures \ref{fig:plots4}--\ref{fig:plots6}.  After averaging over all
    datasets, we plot them in Figure~\ref{fig:relative}.  There is a linear
    trend saying that ambidextrous overshoot by $x^{0.7481}$-fold when
    dexterous overshoot by $x$-fold.  Here, $0.7481$ is the slope of the
    origin-passing least-square line.  It has an impressive coefficient of
    determination ($R^2$) of $0.9870$.

    As for implementation, we commented earlier that Section~\ref{sec:3steps}
    is a convex optimization problem.  But prior works (\textsc{SafeBound}
    \cite{DSB22,DSB23,DSB25} and \textsc{LpBound} \cite{ZMA25,MZA25}), use no
    convexity machinery.  Instead, they precomputed the norms at discrete
    values $p \in \{1, \ldots, 10, \infty\}$ and run a linear program
    constrained by submodularity.  We mirror their approach and precompute
    $\lnMom{p}{R}{1}$ at $p \in \{0.0, 0.1, \dotsc, 49.9, 50.0\}$ and
    $\lnMom{p}{R}{q}$ at $(p, q) \in \{1.0, 1.1, \dotsc, 9.9, 10.0\}^2$.
    Doing so is slow but necessary for an accurate comparison.  Speed
    optimizations are left for future work.

\begin{figure}
    \begin{tikzpicture}
        \begin{axis} [
            xmode=log, xlabel={actual count (geometric mean of)},
            ymode=log, ylabel={upper bound (geometric mean of)},
            graph marker
        ]
            \fill [yellow] (1, 1) -- (1e20, 1) -- (1e20, 1e20);
            \addplot table [x=true, y=dex, meta=shape] {fig37/average.csv};
            \addplot table [x=true, y=ambi, meta=shape] {fig37/average.csv};
        \end{axis}
    \end{tikzpicture}
    \Description{
        A comparison of upper bounds to actual counts.
    }
    \caption{
        The geometric mean of Figures~\ref{fig:plots1}--\ref{fig:plots3},
        i.e., averaging over almost all undirected datasets on SNAP.  The
        horizontal axis is the actual count of homomorphisms; the vertical
        axis is for upper bounds.  Taking geometric means avoids large
        datasets overshadowing small ones.  The region below the diagonal
        is shaded in yellow to help compare relative errors.
    }                                                     \label{fig:average}
\end{figure}
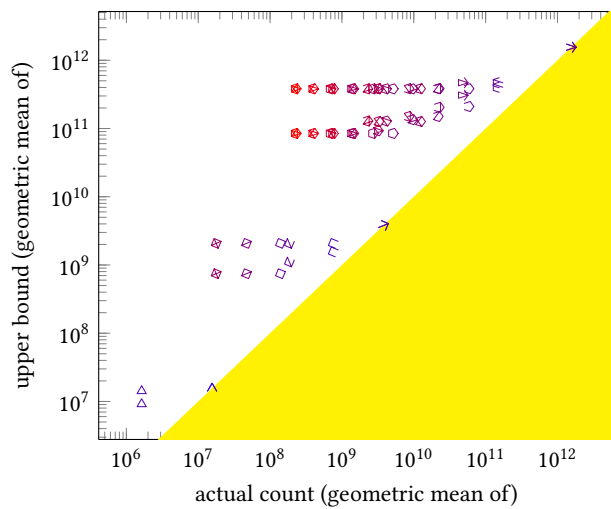

\def\presentdataset#1{%
    \begin{tikzpicture}
        \begin{axis} [
            width=6cm, height=5cm, ticklabel style={overlay, inner sep=1pt},
            xmode=log, ymode=log, try min ticks log=4,
            graph marker 
        ]
            \fill [yellow] (1, 1) -- (1e20, 1) -- (1e20, 1e20);
            \draw (rel axis cs: 0.9, 0.1) node[above left]{\texttt{#1}};
            \addplot table [x=true, y=dex, meta=shape] {fig37/#1.csv};
            \addplot table [x=true, y=ambi, meta=shape] {fig37/#1.csv};
        \end{axis}
        \path (current bounding box.south west) + (-0.5cm, -0.5cm);
    \end{tikzpicture}%
    \hfill
}

\begin{figure}
    \catcode`\_=11
    \leavevmode
    \presentdataset{artist_edges}
    \presentdataset{as20000102}
    \presentdataset{athletes_edges}
    \presentdataset{Brightkite_edges}
    \presentdataset{CA-AstroPh}
    \presentdataset{CA-CondMat}
    \presentdataset{CA-GrQc}
    \presentdataset{CA-HepPh}
    \presentdataset{CA-HepTh}
    \presentdataset{com-amazon}
    \presentdataset{com-dblp}
    \presentdataset{com-youtube}
    \presentdataset{company_edges}
    \presentdataset{deezer_europe_edges}
    \presentdataset{Email-Enron}
    \unskip
    \Description{
        Detailed comparison of upper bounds to actual counts.
    }
    \caption{
        Upper bounds vs.\ actual counts plots for undirected SNAP datasets
        (A--E).
        }                                                  \label{fig:plots1}
\end{figure}

\begin{figure}
    \catcode`\_=11
    \leavevmode
    \presentdataset{facebook_combined}
    \presentdataset{government_edges}
    \presentdataset{Gowalla_edges}
    \presentdataset{HR_edges}
    \presentdataset{HU_edges}
    \presentdataset{lastfm_asia_edges}
    \presentdataset{musae_chameleon_edges}
    \presentdataset{musae_crocodile_edges}
    \presentdataset{musae_DE_edges}
    \presentdataset{musae_ENGB_edges}
    \presentdataset{musae_ES_edges}
    \presentdataset{musae_facebook_edges}
    \presentdataset{musae_FR_edges}
    \presentdataset{musae_git_edges}
    \presentdataset{musae_PTBR_edges}
    \unskip
    \Description{
        Detailed comparison of upper bounds to actual counts.
    }
    \caption{
        Upper bounds vs.\ actual counts plots for undirected SNAP datasets
        (F--M).
    }                                                      \label{fig:plots2}
\end{figure}

\begin{figure}
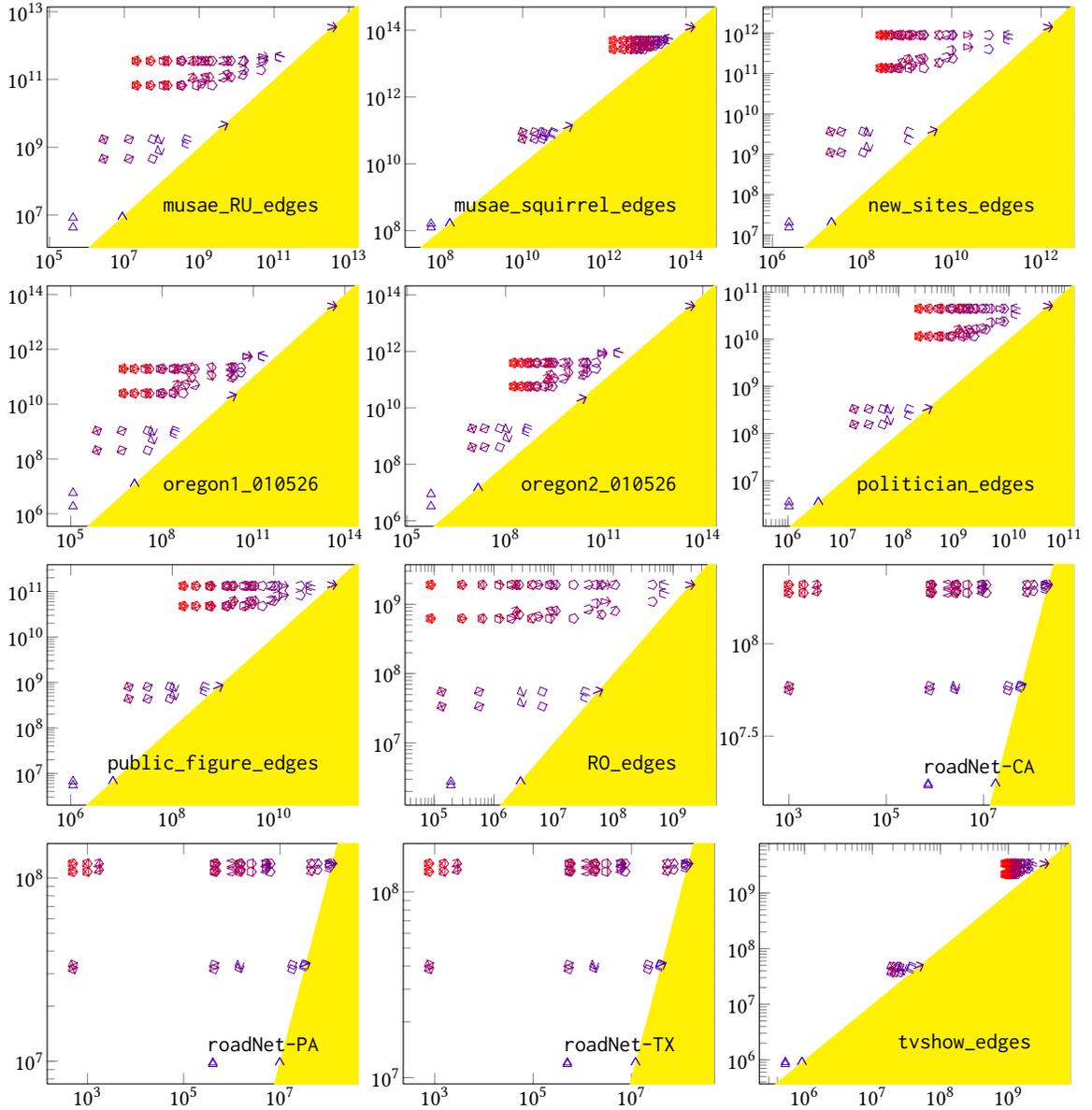

    \catcode`\_=11
    \leavevmode
    \presentdataset{musae_RU_edges}
    \presentdataset{musae_squirrel_edges}
    \presentdataset{new_sites_edges}
    \presentdataset{oregon1_010526}
    \presentdataset{oregon2_010526}
    \presentdataset{politician_edges}
    \presentdataset{public_figure_edges}
    \presentdataset{RO_edges}
    \presentdataset{roadNet-CA}
    \presentdataset{roadNet-PA}
    \presentdataset{roadNet-TX}
    \presentdataset{tvshow_edges}
    \unskip
    \Description{
        Detailed comparison of upper bounds to actual counts.
    }
    \newbox\Kfivebox\setbox\Kfivebox=\hbox{%
        \tikz[scale=2]{\pgfuseplotmark{K5}}%
    }
    \caption{
        Upper bounds vs.\ actual counts plots for undirected SNAP datasets
        (M--T).  Note that \texttt{roadNet-CA}, \texttt{roadNet-PA}, and
        \texttt{roadNet-TX} do not have any $K\sb5$ (no five roads
        intersecting each other), so the corresponding \usebox\Kfivebox\
        icons fall outside.
    }                                                      \label{fig:plots3}
\end{figure}

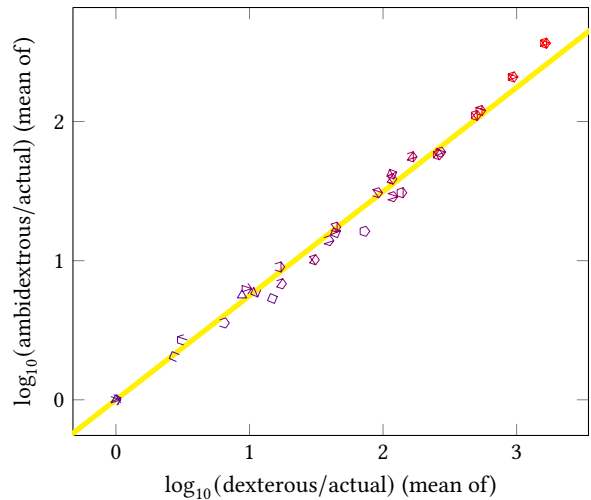
\begin{figure}
    \begin{tikzpicture}
        \begin{axis} [
            xlabel={$\log_{10}(\text{dexterous}/\text{actual})$ (mean of)},
            ylabel={$\log_{10}(\text{ambidextrous}/\text{actual})$ (mean of)},
            graph marker
        ]
            \draw [yellow, line width=2pt] (-100, -74.81) -- (100, 74.81);
            \addplot table [x=dex_rel, y=ambi_rel, meta=shape]
                {fig37/average.csv};
        \end{axis}
    \end{tikzpicture}
    \Description{
        A comparison of relative errors.
    }
    \caption{
        Averaged log of relative errors.  The origin-passing least-square
        line is drawn in yellow.  Its slope is $0.7481$; its $R^2$ is
        $0.9870$.  This means that, when dexterous overestimates by $x$-fold,
        ambidextrous overestimates by $x^{0.7481}$-fold.
    }                                                    \label{fig:relative}
\end{figure}

\def\presentdataset#1{%
    \begin{tikzpicture}
        \begin{axis} [
            width=6cm, height=5cm, ticklabel style={overlay, inner sep=1pt},
            graph marker
        ]
            \draw (rel axis cs: 0.9, 0.1) node[above left]{\texttt{#1}};
            \addplot table [x=dex_rel, y=ambi_rel, meta=shape]
                {fig37/#1.csv};
        \end{axis}
        \path (current bounding box.south west) + (-0.5cm, -0.5cm);
    \end{tikzpicture}%
    \hfill
}

\begin{figure}
    \catcode`\_=11
    \leavevmode
    \presentdataset{artist_edges}
    \presentdataset{as20000102}
    \presentdataset{athletes_edges}
    \presentdataset{Brightkite_edges}
    \presentdataset{CA-AstroPh}
    \presentdataset{CA-CondMat}
    \presentdataset{CA-GrQc}
    \presentdataset{CA-HepPh}
    \presentdataset{CA-HepTh}
    \presentdataset{com-amazon}
    \presentdataset{com-dblp}
    \presentdataset{com-youtube}
    \presentdataset{company_edges}
    \presentdataset{deezer_europe_edges}
    \presentdataset{Email-Enron}
    \unskip
    \Description{
        Detailed comparison of upper bounds to actual counts.
    }
    \caption{
        Relative error plots for undirected SNAP datasets (A--E).
    }                                                      \label{fig:plots4}
\end{figure}

\begin{figure}
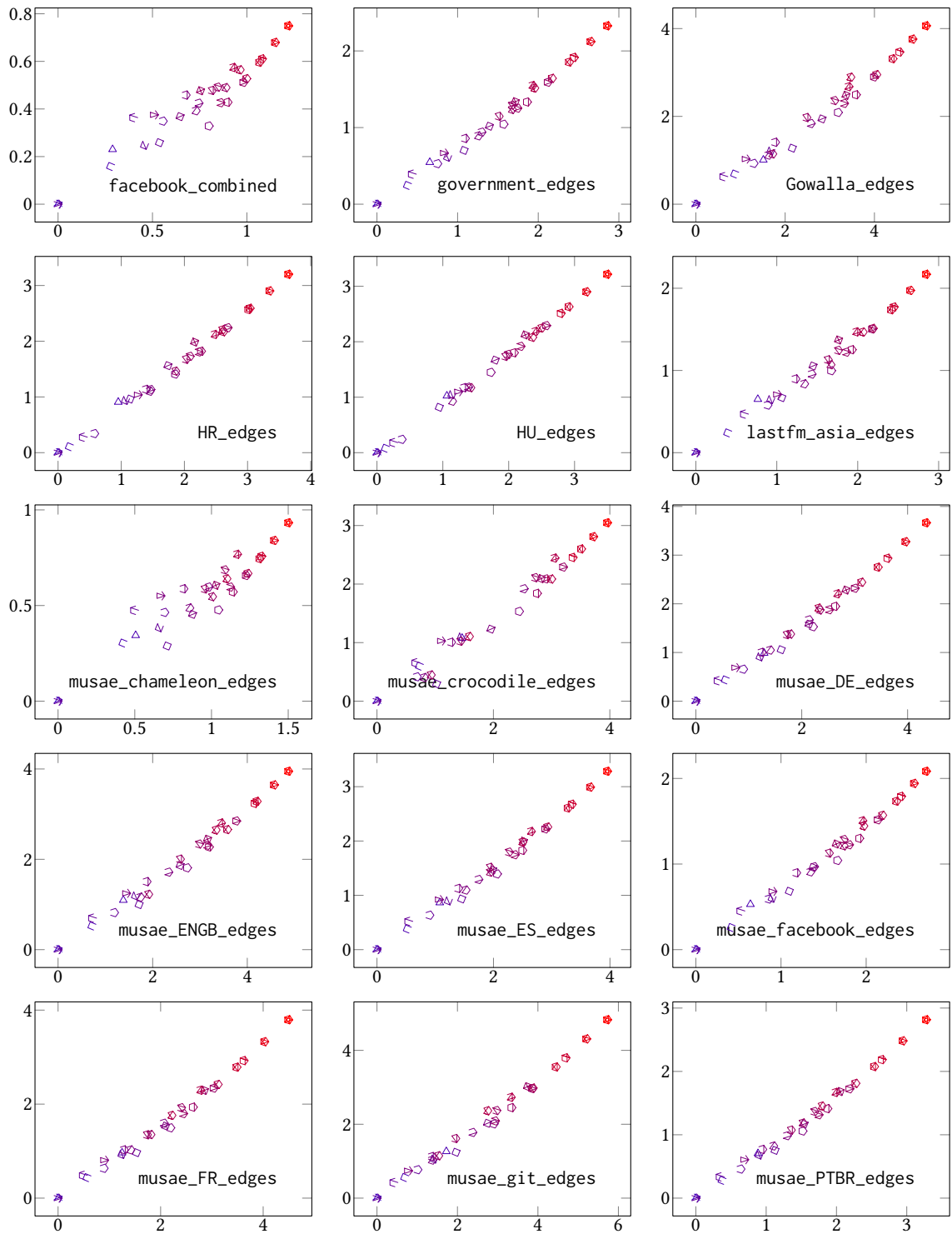

    \catcode`\_=11
    \leavevmode
    \presentdataset{facebook_combined}
    \presentdataset{government_edges}
    \presentdataset{Gowalla_edges}
    \presentdataset{HR_edges}
    \presentdataset{HU_edges}
    \presentdataset{lastfm_asia_edges}
    \presentdataset{musae_chameleon_edges}
    \presentdataset{musae_crocodile_edges}
    \presentdataset{musae_DE_edges}
    \presentdataset{musae_ENGB_edges}
    \presentdataset{musae_ES_edges}
    \presentdataset{musae_facebook_edges}
    \presentdataset{musae_FR_edges}
    \presentdataset{musae_git_edges}
    \presentdataset{musae_PTBR_edges}
    \unskip
    \Description{
        Detailed comparison of upper bounds to actual counts.
    }
    \caption{
        Relative error plots for undirected SNAP datasets (F--M).
    }                                                      \label{fig:plots5}
\end{figure}

\begin{figure}
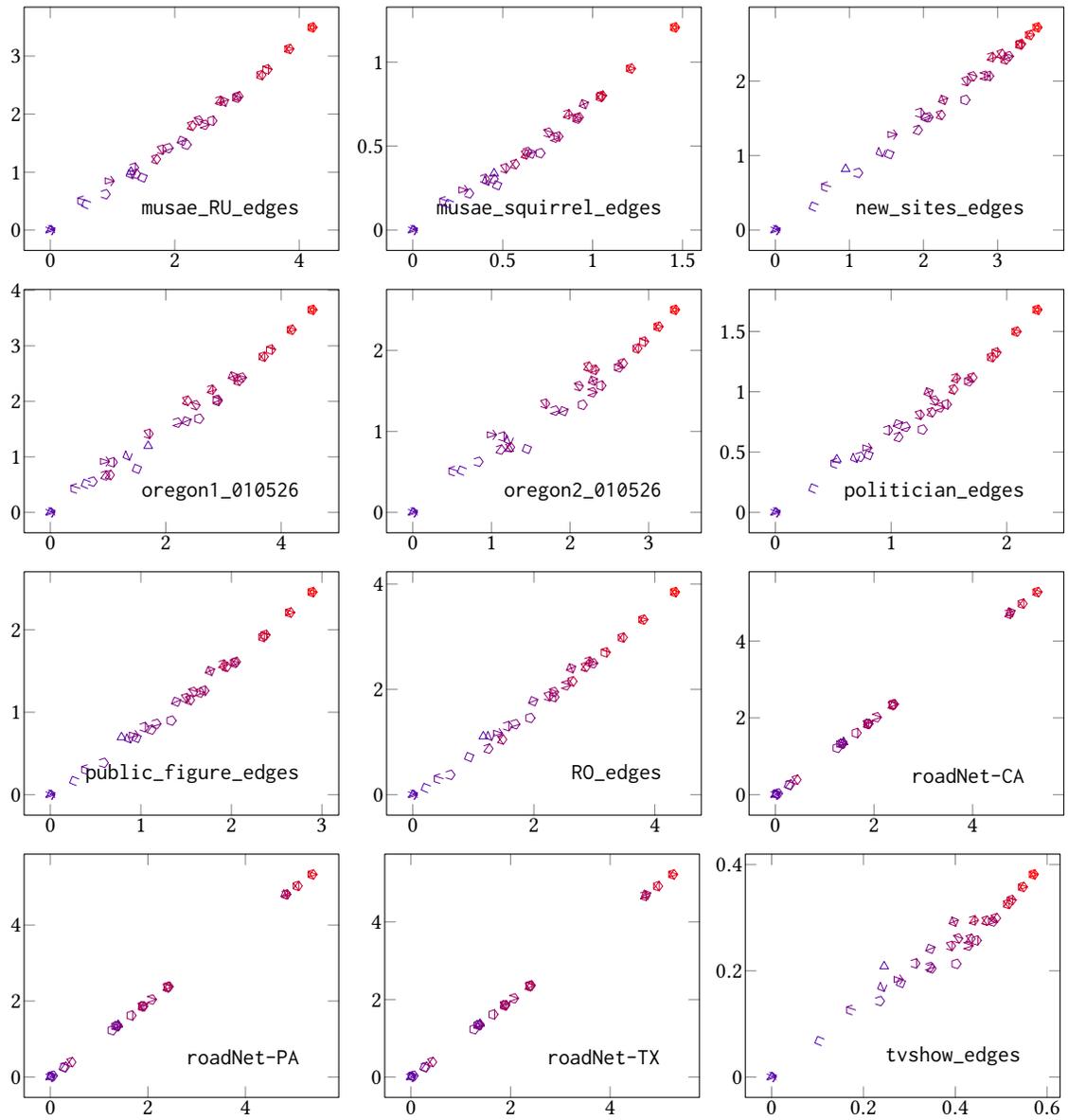

    \catcode`\_=11
    \leavevmode
    \presentdataset{musae_RU_edges}
    \presentdataset{musae_squirrel_edges}
    \presentdataset{new_sites_edges}
    \presentdataset{oregon1_010526}
    \presentdataset{oregon2_010526}
    \presentdataset{politician_edges}
    \presentdataset{public_figure_edges}
    \presentdataset{RO_edges}
    \presentdataset{roadNet-CA}
    \presentdataset{roadNet-PA}
    \presentdataset{roadNet-TX}
    \presentdataset{tvshow_edges}
    \unskip
    \Description{
        Detailed comparison of upper bounds to actual counts.
    }
    \newbox\Kfivebox\setbox\Kfivebox=\hbox{%
        \tikz[scale=2]{\pgfuseplotmark{K5}}%
    }
    \caption{
        Relative error plots for undirected SNAP datasets (M--T).  Note that
        \texttt{roadNet-CA}, \texttt{roadNet-PA}, and \texttt{roadNet-TX} do
        not have any $K\sb5$ (no five roads intersecting each other), so the
        corresponding \usebox\Kfivebox\ icons fall outside.
    }                                                      \label{fig:plots6}
\end{figure}

\section{Conclusions}

    In the present paper, we generalize degree sequences to bi-degree
    sequences, $p$-norms to bivariate moments, and dexterous bounds to
    ambidextrous bounds.  Our new bounds yield pessimistic cardinality
    estimates that are tighter, as confirmed by both inequalities and
    simulations.  We also prove that both old and new bounds are convex
    functions, which allows convex optimization techniques to step in.
    Future work might include using sketching and spline to strike a balance
    between speed and accuracy of bivariate moments.

\bibliographystyle{ACM-Reference-Format}
\bibliography{AmbidexterCardinal-29}

\end{document}